\newtheorem{theorem}{Theorem}
\newtheorem{definition}[theorem]{Definition}
\newtheorem{lemma}[theorem]{Lemma}
\newtheorem{corollary}[theorem]{Corollary}
\newtheorem{example}[theorem]{Example}
\newcommand{\tuple}[1]{\langle #1 \rangle}
\newcommand{\limrun}[0]{\infty}
\newcommand{\emptyword}[0]{\lambda}
\newcommand{\trans}[1]{\mathchoice{\xrightarrow{#1}}{\xrightarrow{\smash{\lower1pt\hbox{$\scriptstyle #1$}}}}{\text{Error}}{\text{Error}}}
\newcommand{\occ}[2]{|#1|_{#2}}
\newcommand{\prefsel}[0]{\upharpoonright}
\begin{document}

\title{A direct proof of Agafonov's theorem \\
       and an extension to shifts of finite type}
\author{Olivier Carton}

\date{\today}
\maketitle

\begin{abstract}
  We provide a direct proof of Agafonov's theorem which states that finite
  state selection preserves normality.  We also extends this result to the
  more general setting of shifts of finite type by defining selections
  which are compatible with the shift.  A slightly more general statement is
  obtained as we show that any Markov measure is preserved by finite state
  compatible selection.
\end{abstract}


\section{Introduction}

Normality was introduced by Borel in~\cite{Borel09} more than one hundred
years ago to formalize the most basic form of randomness for real
numbers. A number is normal to a given integer base if its expansion in
that base is such that all blocks of digits of the same length occur in it
with the same limiting frequency.

Although normality is a purely combinatorial property, it has close links
with finite state machines.  A fundamental theorem relates normality and
finite automata: an infinite sequence is normal to a given alphabet if and
only if it cannot be compressed by lossless finite transducers. These are
deterministic finite automata with injective input-output behavior. This
result was first obtained by joining a theorem by Schnorr and Stimm
\cite{SchnorrStimm71} with a theorem by Dai, Lathrop, Lutz and Mayordomo
\cite{Dai04}. Becher and Heiber gave a direct proof
in~\cite{BecherHeiber13}.  Another astonishing result is Agafonov's theorem
stating that selecting symbols in a normal sequence using a finite state
machine preserves normality \cite{Agafonov68}.  Agafonov’s publication
\cite{Agafonov68} does not really include the proof but O’Connor
\cite{OConnor88} provided it using predictors defined from finite automata,
and Broglio and Liardet \cite{BroglioLiardet92} generalized it to arbitrary
alphabets.  Later Becher and Heiber gave another proof based of the
characterization of normality by non-compressibility by lossless finite
transducers \cite{BecherHeiber13}.  In this paper, we provide a direct
proof of Agafonov's theorem.  The proof is almost elementary but it still
relies on Markov chains arguments.

The notion of normality has been extended to broader contexts like the one
of dynamical systems and especially shifts of finite type
\cite{Madritsch18}.  When sofic shifts are irreducible and aperiodic, they
have a measure of maximal entropy and a sequence is then said to be normal
if the frequency of each block equals its measure.  This extension to
shifts meets the original aim of normality to study expansions of numbers
in bases when the shift arises from a numerical systems like the $β$-shifts
coming from the numeration in a non-integer base~$β$.  Normality can be
again interpreted as the good distribution of blocks of digits in the
expansion of a number in a base~$β$.  In this paper, we extend Agafonov's
theorem to the setting of shift of finite type.  More precisely, we show
that genericity for Markovian measure is preserved by selection with finite
state state machines if the machines satisfy some compatibility condition
with the measure.  This result includes the case of shifts of finite type
as their Parry measure is Markovian.

The paper is organized as follows.  Section~\ref{sec:prelim} is devoted to
notation and main definitions.  The link between selection and special
finite-state machines called selectors is given in
Section~\ref{sec:selection}.  Agafanov's theorem is stated and proved in
Section~\ref{sec:agafanov}.  The extension of the theorem to Markovian
measures is given in Section~\ref{sec:markov}.  Note that the proof given
that section subsumes the one given in the previous one.  We keep both
proofs since we think that the one in Section~\ref{sec:agafanov} is a nice
preparation for the reader to the one in Section~\ref{sec:markov}.

\section{Preliminaries} \label{sec:prelim}

\subsection{Sequences, shifts and selection}

We write $ℕ$ for the set of all non-negative integers.  Let $A$ be a finite
set.  We let $A^*$ and $A^ℕ$ respectively denote the sets of all finite and
infinite sequences over the alphabet~$A$.  Similarly $A^k$ stands for the
set of sequences of length~$k$.  Finite sequence are also called
\emph{words}. The empty word is denoted by~$\emptyword$ and the length of a
word~$w$ is denoted $|w|$.  The positions in finite and infinite words are
numbered starting from~$1$.  For a word~$w$ and positions
$1 ⩽ i ⩽ j ⩽ |w|$, we let $w[i]$ and $w[i{:}j]$ denote respectively the
symbol~$a_i$ at position~$i$ and the word $a_ia_{i+1}⋯ a_j$ from
position $i$ to position~$j$.  A word of the form $w[i{:}j]$ is called a
\emph{block} of~$w$.  A word~$u$ is a \emph{prefix} (respectively
\emph{suffix}) of a word~$w$, denoted $u ⊑ w$, if $w = uv$ (respectively
$w = vu$) for some word~$v$.

For any finite set $S$ we denote its cardinality with $\#S$.  We write
$\log$ for the base~$2$ logarithm.

In this article we are going to work on shift spaces, in particular
shifts of finite type (SFT).  Let $A$ be a given alphabet.  The
\emph{full shift} is the set $A^ℕ$ of all (one-sided) infinite sequences
$(x_n)_{n⩾1}$ of symbols in~$A$.  The shift~$σ$ is the function from~$A^ℕ$
to~$A^ℕ$ which maps each sequence $(x_n)_{n⩾1}$ to the sequence
$(x_{n+1})_{n⩾1}$ obtained by removing the first symbol.

A \emph{shift space} of~$A^ℕ$ or simply a \emph{shift} is a subset~$X$
of~$A^ℕ$ which is closed for the product topology and invariant under the
shift operator, that is $σ(X) = X$.  Let $F ⊂ A^*$ be a set of finite words
called \emph{forbidden blocks}.  The shift~$X_F$ is the subset of~$A^ℕ$
made of sequences without any occurrences of blocks in~$F$.  More formally,
it is the set
\begin{displaymath}
  X_F = \{ x : x[m{:}n] ∉ F \text{ for each } 1 ⩽ m ⩽ n\}.
\end{displaymath}

It is well known that a shift~$X$ is characterized by its forbidden blocks,
that is $X = X_F$ for some set $F ⊂ A^*$.  The shift~$X$ is said to be of
\emph{finite type} if $X = X_F$ for some finite set~$F$ of forbidden blocks
\cite[Def.~2.1.1]{LindMarcus92}.  Up to a change of alphabet, any shift
space of finite type is the same as a shift space~$X_F$ where any forbidden
block has length~$2$, that is $F ⊂ A^2$.

For simplicity, we always assume that each forbidden block has length~$2$.
In that case, the set~$F$ is given by an $A × A$-matrix
$P = (p_{ab})_{a,b ∈ A}$ where $p_{ab} = 0$ if $ab ∈ F$ and $p_{ab} > 0$
otherwise and we write $X = X_P$.  The shift~$X$ is called
\emph{irreducible} if the graph induced by the matrix~$P$ is strongly
connected, that is, for each symbols $a,b ∈ A$, there exists an integer~$n$
(depending on $a$ and~$b$) such that $P^n_{ab} > 0$.  The shift~$X$ is
called \emph{irreducible} and \emph{aperiodic} if there exists an
integer~$n$ such that $P^n_{ab} > 0$ for each symbols $a,b ∈ A$.

\begin{example}[Golden mean shift]
  The \emph{golden mean shift} is the shift space $X_F \subset \{0,1\}^ℕ$
  where the set of forbidden blocks is $F = \{ 11 \}$.  It is made of all
  sequences over $\{0,1\}$ with no two consecutive~$1$.  This subshift is
  also equal to $X_M$ where the matrix~$M$ is given by
  $M = \left(\begin{smallmatrix} 1&1 \\ 1&0
    \end{smallmatrix}\right)$.
\end{example}

Let $x = a_1a_2a_3 ⋯ $ be a sequence over the alphabet~$A$.  Let $L ⊆ A^*$
be a set of finite words over~$A$. The word obtained by \emph{oblivious
  prefix selection} of~$x$ by~$L$ is $x \prefsel L = a_{i_1}a_{i_2}a_{i_3}
⋯$ where $i_1,i_2,i_3,…$ is the enumeration in increasing order of all the
integers~$i$ such that the prefix $a_1 a_2 ⋯ a_{i-1}$ belongs to~$L$.  This
selection rule is called \emph{oblivious} because the symbol~$a_i$ is not
included in the considered prefix.  If $L = A^*1$ is the set of words
ending with a~$1$, the sequence~$x \prefsel L$ is made of all symbols
of~$x$ occurring after a~$1$ in the same order as they occur in~$x$.

\subsection{Measures and genericity}

A \emph{probability measure on $A^*$} is a function
$μ : A^* \rightarrow [0,1]$ such that $μ(\emptyword) = 1$ and
\begin{displaymath}
  \sum_{a ∈ A}{μ(wa)} = μ(w)
\end{displaymath}
holds for each word $w ∈ A^*$.  The simplest example of a probability
measure is a \emph{Bernoulli measure}.  It is a monoid morphism from~$A^*$
to~$[0,1]$ (endowed with multiplication) such that
$\sum_{a ∈ A}{μ(a)} = 1$.  Among the Bernoulli measures is the
\emph{uniform measure} which maps each word $w ∈ A^*$ to $(\#A)^{-|w|}$.
In particular, each symbol~$a$ is mapped to $μ(a) = 1/\#A$.

By the Carathéodory extension theorem, a measure~$μ$ on~$A^*$ can be
uniquely extended to a probability measure~$\hat{μ}$ on~$A^ℕ$ such that
$\hat{μ}(wA^ℕ) = μ(w)$ holds for each word $w ∈ A^*$.  In the rest of the
paper, we use the same symbol for $μ$ and~$\hat{μ}$.  A probability
measure~$μ$ is said to be \emph{(shift) invariant} if the equality
\begin{displaymath}
  \sum_{a ∈ A}{μ(aw)} = μ(w)
\end{displaymath}
holds for each word $w ∈ A^*$.

We now recall the definition of Markov measures.  For a stochastic
matrix~$P$ and a stationary distribution~$π$, that is a raw vector such
that $π P = π$, the Markov measure $μ_{π,P}$ is the invariant measure
defined by the following formula \cite[Lemma~6.2.1]{Kitchens98}.
\begin{displaymath}
  μ_{π,P} (a_1 a_2 ⋯ a_k) = π_{a_1} P_{a_1a_2} ⋯ P_{a_{k-1} a_k}
\end{displaymath}

A measure~$μ$ is \emph{compatible} with a shift~$X_F$ if it only puts
weight on blocks of~$X$, that is, $μ(w) > 0$ implies $w ∉ F$ for each
word~$w$.  For a shift of finite type, there is a unique compatible measure
with maximal entropy \cite[Thm.~6.2.20]{Kitchens98}. This measure is called
the \emph{Parry measure} and it is a Markov measure.  This measure can be
explicitly given as follows.  The Parry measure of a SFT~$X_M$ is the
Markov measure given by the stochastic matrix $P = (P_{i,j})$ where
$P_{i,j} = M_{i,j} r_j/θ r_i$ and the stationary probability
distribution~$π$ defined by $π_i = l_ir_i$, where $θ$ is the Perron
eigenvalue of the matrix~$M$ and the vectors $l$ and~$r$ are respectively
the left and right eigenvectors of~$M$ for~$θ$ normalized so that
$\sum_{i=1}^k l_ir_i = 1$.

\begin{example}[Parry measure of the golden mean shift]
  
  Consider again the golden mean shift~$X$.  Its Parry measure is the
  Markov measure $μ_{π,P}$ where $π$ is the distribution $π = (λ^2/(1+λ^2),
  1/(1+λ^2))$ and $P$ is the stochastic matrix
  $P = \left(\begin{smallmatrix} 1/λ & 1/λ^2 \\
    1 & 0 \end{smallmatrix}\right)$ where $λ$ is the golden mean.
\end{example}

Conversely, the \emph{support} of an invariant measure~$μ$ is the shift
$X_μ = X_F$ where $F$ is the set of words of measure zero, that is
$F = \{ w : μ(w) = 0 \}$.  If $μ$ is the Markovian measure~$μ_{π,P}$, then
its support~$X_μ$ is a shift of finite type because it is equal to the
shift~$X_P$ given by matrix~$P$.

We recall here the notion of normality and the notion of genericity.  We
start with the notation for the number of occurrences of a given word~$u$
within another word~$w$.  For two words $u$ and~$w$, the number
$\occ{w}{u}$ of \emph{occurrences} of~$u$ in~$w$ is given by $\occ{w}{u} =
\#\{ i : w[i{:}i+|u|-1] = u \}$.  Borel's definition~\cite{Borel09} of
normality for a sequence $x ∈ A^ℕ$ is that $x$ is \emph{normal} if for each
finite word $w ∈ A^*$
\begin{displaymath}
  \lim_{n → ∞} \frac{\occ{x[1{:}n]}{w}}{n} = (\#A)^{-|w|}
\end{displaymath}
A sequence~$x$ is called \emph{generic} for a measure~$μ$ (or merely
\emph{$μ$-generic}) if for each word $w ∈ A^*$
\begin{displaymath}
  \lim_{n → ∞} \frac{\occ{x[1{:}n]}{w}}{n} = μ(w)
\end{displaymath}
Normality is then the special case of genericity when the measure~$μ$ is
the uniform measure.  There are another definitions of normality and
genericity taking into account only some occurrences, called aligned
occurrences, of each word~$w$.  More precisely, the sequence~$x$ is
factorized $x = w_1w_2w_3⋯$ where $|w_i| = |w|$ for each $i ⩾ 1$ and
it is required that the quotient $\#\{ i ⩽ n : w_i = w\}/n$ converges to
$μ(w)$ when $n$ goes to infinity for each word~$w$.  It is shown
in~\cite{AlvarezCarton19} that the two notions coincide as long as the
measure~$μ$ is Markovian.



\section{Finite-state selection} \label{sec:selection}

In this section, we introduce the automata with output also known as
\emph{transducers} which are used to select symbols from a sequence.  We
consider \emph{deterministic transducers} computing functions from
sequences in a shift~$X$ to sequences in a shift~$Y$, that is, for a given
input sequence $x ∈ X$, there is at most one output sequence $y ∈ Y$.  We
focus on transducer that operate in real-time, that is, they process
exactly one input alphabet symbol per transition.  We start with the
definition of a transducer.

\begin{definition} 
  An \emph{input deterministic transducer} $𝒯$ is a tuple
  $\tuple{Q,A,B,δ,I,F}$, where
  \begin{itemize} \itemsep0cm
  \item $Q$ is a finite set of \emph{states},
  \item $A$ and $B$ are the input and output alphabets, respectively,
  \item $δ: Q × A → B^* × Q$ is the
    \emph{transition} function,
  \item $I \subseteq Q$ and $F \subseteq Q$ are the sets of \emph{initial}
    and \emph{final} states, respectively.
  \end{itemize}
\end{definition}
Input deterministic transducers are also called sequential in the
litterature \cite{Sakarovitch09}.  The relation $δ(p,a) = (w,q)$ is written
$p \trans{a|v} q$ and the tuple $\tuple{p,a,w,a}$ is then called a
\emph{transition} of the transducer.  A finite (respectively infinite)
\emph{run} is a finite (respectively infinite) sequence of consecutive
transitions,
\begin{displaymath}
  q_0 \trans{a_1|v_1} q_1 \trans{a_2|v_2} q_2
  ⋯ q_{n-1} \trans{a_n|v_n} q_n.
\end{displaymath}
Its \emph{input and output labels} are respectively $a_1⋯ a_n$ and
$v_1 ⋯ v_n$.  A finite run is written $q_0 \trans{a_1⋯ a_n|v_1 ⋯ v_n} q_n$.
An infinite run is written $q_0 \trans{a_1a_2a_3⋯|v_1v_2v_3⋯} \limrun$.  An
infinite run is accepting if its first state~$q_0$ is initial.  Note that
there is no accepting condition.  This is due to the fact that we always
assume that the domain is a closed subset of~$A^ℕ$.  Since transducers are
assumed to be input deterministic there is at most one run with input
label~$x$ for each $x$ in~$A^ℕ$.  If the output label is the infinite
sequence~$y$, we write $y = 𝒯(x)$.  By a slight abuse of notation, we write
$𝒯(x[m{:}n])$ for the output of~$𝒯$ along that run while reading the block
$x[m{:}n]$ of~$x$.  We always asumme that all transducers are \emph{trim}:
each state occurs in at least one accepting run.  Since transducers are
input deterministic, the stating state and the input label determine the
run and the ending state.  For a state~$p$ and a word~$u$, we let $p * u$
and $p ⋅ u$ denote respectively the run $p \trans{u|v} q$ and its ending
state~$q$.

A \emph{selector} is a deterministic transducer such that each of its
transitions has one of the types $p \trans{a|a} q$ (type~I),
$p \trans{a|\emptyword} q$ (type~II) for a symbol $a ∈ A$. In a selector,
the output of a transition is either the symbol read by the transition
(type~I) or the empty word (type~II).  Therefore, it can be always assumed
that the output alphabet~$B$ is the same as the input alphabet~$A$.  It
follows that for each run $p \trans{u|v} q$, the output label~$v$ is a
subword, that is a subsequence, of the input label~$u$.

\begin{figure}[htbp]
  \begin{center}
  \begin{tikzpicture}[->,>=stealth',initial text=,semithick,auto,inner sep=1.5pt]
    \tikzstyle{every state}=[minimum size=0.4]
    \node[state,initial left] (q0) at (1.25,2) {$q_0$};
    \node[state]  (q1) at (0,0) {$q_1$};
    \node[state]  (q2) at (2.5,0) {$q_2$};
    \path (q0) edge[out=30,in=-30,loop] node {$0|0$} ();
    \path (q0) edge[bend left=20] node {$1|1$} (q1);
    \path (q1) edge[bend left=20] node {$0|0$} (q0);
    \path (q1) edge[swap] node {$1|\emptyword$} (q2);
    \path (q2) edge[out=30,in=-30,loop] node {$0|\emptyword$} ();
    \path (q2) edge[swap] node {$1|1$} (q0);
  \end{tikzpicture}
  \end{center}
  \caption{A selector}
  \label{fig:select0}
\end{figure}
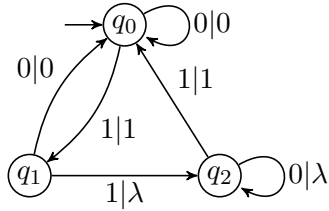

A selector is \emph{oblivious} if all transitions starting from a given
state have the same type.  The selector pictured in
Figure~\ref{fig:select0} is not oblivious but the one pictured in
Figure~\ref{fig:select1} is oblivious.  The terminology is justified by the
following relation between oblivious prefix selection and selectors.  If
$L ⊆ A^*$ is a rational set, the oblivious prefix selection by~$L$ can be
performed by an oblivious selector.  There is indeed an oblivious
selector~$𝒮$ such that for each input word~$x$, the output $𝒮(x)$ is the
result $x \prefsel L$ of the selection by~$L$.  This selector~$𝒮$ can be
obtained from any deterministic automaton~$𝒜$ accepting~$L$.  Replacing
each transition $p \trans{a} q$ of~$𝒜$ by either $p \trans{a|a} q$ if the
state~$p$ is accepting or by $p \trans{a|\emptyword} q$ otherwise yields
the selector~$𝒮$.  It can be easily verified that the obtained transducer
is an oblivious selector performing the oblivious prefix selection by~$L$.
Conversely, each oblivious selector performs the oblivious prefix selection
by~$K$ where $K$ is the set of words being the input label of a run from
the initial state to a state~$q$ such that transitions starting from~$q$
have type~I.

\begin{figure}[htbp]
  \begin{center}
  \begin{tikzpicture}[->,>=stealth',initial text=,semithick,auto,inner sep=1.5pt]
    \tikzstyle{every state}=[minimum size=0.4]
    \node[state,initial above] (q0) at (0,0) {$q_0$};
    \node[state]  (q1) at (2,0) {$q_1$};
    \path (q0) edge[out=210,in=150,loop] node {$0|\emptyword$} ();
    \path (q0) edge[bend left=20] node {$1|\emptyword$} (q1);
    \path (q1) edge[bend left=20] node {$0|0$} (q0);
    \path (q1) edge[out=30,in=-30,loop] node {$1|1$} ();
  \end{tikzpicture}
  \end{center}
  \caption{Oblivious selector that selects symbols following a~$1$}
  \label{fig:select1}
\end{figure}
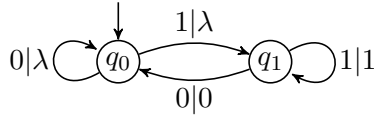

The transducer pictured in Figure~\ref{fig:select1} is an oblivious
selector that selects symbols occurring after a~$1$.  It performs the
oblivious prefix selection by~$L$ where $L$ is the set $A^*1$ of words
ending with a~$1$.

Some reasoning about transducers only involve the input labels of
transitions and ignore the output labels.  We call \emph{automaton} a
transducer where output labels of transitions are removed.  This means that
the transition function~$δ$ is then a function from $Q × A$ to~$Q$ where
$Q$ is the state set and $A$ the alphabet.

\section{Preservation of  normality} \label{sec:agafanov}

In this section, we consider normality in the full shift.  We give an
alternative proof of Agafonov's result \cite{Agafonov68} that finite state
selection preserves normality.  This means that if the sequence~$x$ is
normal and $L$ is a regular set of finite words, then the
sequence~$x \prefsel L$ is still normal.  Since it has been remarked that
selecting by a regular set is the same as using an oblivious selector, the
result means that if $𝒮$ is an oblivious selector and $x$ is normal, then
$𝒮(x)$ is also normal.

The following theorem of Agafonov states that oblivious prefix selection by
a regular set preserves normality.
\begin{theorem}[Agafonov \cite{Agafonov68}] \label{thm:selection}
  If $x$ is normal and $L$ is regular, then $x \prefsel L$ is still normal.
\end{theorem}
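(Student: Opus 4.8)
The plan is to reduce the problem to a statement about a selector $\mathcal{S}$ driven by a normal sequence $x$, and to analyse the run of $\mathcal{S}$ on $x$ using the underlying automaton $\mathcal{A}$ obtained by forgetting the outputs. By the discussion preceding the theorem, it suffices to show: if $\mathcal{S}$ is an oblivious selector and $x$ is normal, then $\mathcal{S}(x)$ is normal. Write the states of $\mathcal{S}$ as $Q$, partitioned into the \emph{selecting} states (those whose outgoing transitions have type~I) and the \emph{non-selecting} ones. Running $\mathcal{A}$ on $x$ produces a sequence of visited states $q_0, q_1, q_2, \dots$; the output $\mathcal{S}(x)$ consists of exactly those symbols $x[i]$ read from a selecting state $q_{i-1}$.

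First I would fix a target word $w$ of length $\ell$ and a state $q$, and estimate, for large $n$, the number of positions $i \le n$ at which the run of $\mathcal A$ on $x$ is in state $q$ \emph{and} the next $\ell$ selected symbols (starting from position $i$ on, while in a selecting state) spell out $w$. The key combinatorial fact is that "being in state $q$ after reading $x[1{:}i{-}1]$" is the same as "$x[1{:}i{-}1]$ lies in the regular language $L_q$ of words leading $q_0$ to $q$"; since $x$ is normal, the frequency with which this happens has a well-defined limit, and more to the point, for any \emph{finite} block specification the frequency of positions where $x[1{:}i{-}1]\in L_q$ and $x[i{:}i{+}k{-}1] = u$ is governed by the normality of $x$. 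I would use normality in the aligned-block form together with an ergodic/mixing argument for the finite automaton $\mathcal A$ itself: since we may take products of $\mathcal A$ with itself and with shift-window automata, normality of $x$ forces the empirical distribution of "state $\times$ next block" pairs to converge. The selected-symbol word $w$ occurring as the next $\ell$ selected symbols from position $i$ then corresponds to a finite disjunction of such block events (the positions of the selected symbols among $x[i{:}i{+}m{-}1]$ are themselves determined by the automaton), so its frequency converges as well.

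The crux is then to show that the limiting frequency of $w$ among the first $N$ selected symbols equals $(\#A)^{-\ell}$, independently of the selector. The clean way is to argue \emph{symbol by symbol}: condition on the state $q$ from which the next symbol is selected; because $x$ is normal, the next symbol is $a$ with asymptotic conditional frequency $1/\#A$ for every $a$, and — crucially — this conditioning is "memoryless enough" that successive selected symbols behave like independent uniform draws. Concretely, I would track, for each selecting state $q$ and each word $w$, the quantity $\mu_n(q,w) = \frac1n \#\{\,j \le n : \text{the $j$th selected symbol is read from state } q \text{ and the following } \ell \text{ selected symbols are } w\,\}$, set up a linear recurrence relating $\mu_n(q, aw')$ to $\mu_n(\,\cdot\,, w')$ via the automaton's transitions, and show by induction on $|w|$ that every accumulation point satisfies the uniform law. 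Normality of $x$ enters precisely to guarantee that the "splitting" of the frequency mass across outgoing transitions is proportional to $1/\#A$ in the limit.

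The main obstacle I anticipate is exactly this last step: controlling the \emph{correlations} between consecutive selected symbols. A naive frequency count shows each individual selected symbol is uniform, but normality of $\mathcal{S}(x)$ requires all blocks to be uniform, and a priori the selector could reintroduce bias by choosing \emph{when} to select based on what it has seen. This is where the Markov-chain machinery the author advertises comes in: one sets up the finite-state process "current state of $\mathcal A$" as a Markov-like chain whose empirical transition frequencies are forced by normality of $x$ to be the uniform ones on the subgraph actually used, proves this chain is (eventually) irreducible on each relevant strongly connected component, and then reads off from its stationary behaviour that the selected-symbol block frequencies converge to the uniform measure. Making the "empirical transition frequencies are uniform" claim precise — and handling transient states and periodicity in $\mathcal A$ — is the technical heart of the argument; everything else is bookkeeping with the definition of oblivious prefix selection.
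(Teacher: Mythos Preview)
Your architecture matches the paper's: pass to an oblivious selector, strip outputs to get the automaton~$\mathcal{A}$, use the run on a normal input to reach a recurrent SCC (Lemma~\ref{lem:recurscc}), and invoke Markov-chain ergodicity for state frequencies (Corollary~\ref{cor:freqstate}). The divergence, and the gap, is in how you propose to extract the output block frequencies.

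The problem is at your sentence ``the selected-symbol word~$w$ occurring as the next~$\ell$ selected symbols \dots\ corresponds to a finite disjunction of such block events.'' It does not: the positions of the next~$\ell$ selected symbols are unbounded, so for any fixed window~$m$ there is leftover mass from inputs that select fewer than~$\ell$ symbols in that window, and you give no mechanism to turn ``most inputs select at least~$\ell$ symbols'' into ``each~$w$ gets its fair share.'' The paper fills this hole with a short combinatorial lemma you do not mention (Lemma~\ref{lem:upper}): for every state~$p$ and every~$w$, at most $(\#A)^{n-|w|}$ of the $(\#A)^n$ runs of length~$n$ from~$p$ have~$w$ as a prefix of their output. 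Because this upper bound holds simultaneously for \emph{all}~$w$, and the ergodic theorem (Lemma~\ref{lem:ergodic}) forces all but an $\varepsilon$-fraction of runs to output at least~$\ell$ symbols, the upper bounds for the competitors $w'\neq w$ squeeze the count for~$w$ up to at least $(1-\varepsilon)(\#A)^{n-\ell}$ (Lemma~\ref{lem:equirun}). The snake automaton~$\mathcal{A}^n$ together with Corollary~\ref{cor:freqstate} then gives every length-$n$ run from~$p$ the same asymptotic frequency~$\pi_p/(\#A)^n$, and the $(\#A)^{-\ell}$ frequency for~$w$ drops out. Your alternative ``symbol-by-symbol'' recurrence on~$\mu_n(q,aw')$ is not the paper's route and, as you acknowledge, runs straight into the correlation obstacle; making it rigorous would require bounding how many inputs reach each selecting state after an unbounded non-selecting stretch, which is Lemma~\ref{lem:upper} again in disguise.
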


The strategy of the proof is the following.  We consider an oblivious
selector~$𝒮$ performing selection by~$L$.  This means that if $x$ and $y$
are the input and output label of successful run, then $y = x \prefsel L$.
We show then that if the input label~$x$ a normal sequence, then the output
of the run of~$𝒮$ is also normal.  We fix a state~$p$ of~$𝒮$ and an
integer~$\ell$.  We show that for $k$ great enough, the number of runs
starting from~$p$ and outputting less than $\ell$ symbols is negligible.
Then we show that for each words $w$ and~$w'$ of length~$\ell$, the number
of runs outputting $w$ and $w'$ are almost the same.  Finally, we show that
all these runs of lengths~$k$ starting from~$p$ have the same frequency in
a run whose input is a normal word.

The following lemma shows that the number of runs starting from a 
state~$p$ and outputting a fixed word~$w$ is not too large.
\begin{lemma} \label{lem:upper}
  Let $𝒮$ be an oblivious selector.  For each state~$p$
  of~$𝒮$, each integer~$n ⩾ 0$, and word~$w ∈ A^*$ such that
  $|w| ⩽ n$, there are at most $(\#A)^{n-|w|}$ runs $p \trans{u|v} q$ of
  length~$n$ such that $w$ is a prefix of the output label~$v$.
\end{lemma}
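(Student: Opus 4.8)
The plan is to count the runs of length $n$ starting from $p$ by induction on $n$, keeping track of how much of the word $w$ has already been output. First I would set up the right inductive parameter: for a state $q$ and a word $w$, let $N(q,m,w)$ denote the number of runs of length $m$ starting from $q$ whose output label has $w$ as a prefix, and prove by induction on $m$ that $N(q,m,w) \le (\#A)^{m-|w|}$ whenever $|w| \le m$. The base case is $|w| = 0$: then every run of length $m$ qualifies, and since the selector is input deterministic there is exactly one run per input word of length $m$, giving at most $(\#A)^m = (\#A)^{m-|w|}$ runs.

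For the inductive step, consider a run of length $m \ge 1$ starting from $q$ and look at its first transition $q \trans{a|v_1} q'$. Here $v_1$ is either $a$ (type I) or $\emptyword$ (type II), and crucially — this is where obliviousness enters — the type is determined by $q$ alone, not by $a$. Split into the two cases. If $q$ is a type-II state, then $v_1 = \emptyword$, so the run still needs to output all of $w$ in its remaining $m-1$ transitions; summing over the $\#A$ possible first symbols $a$ (each leading to some $q' = q\cdot a$) and applying the induction hypothesis with the same $w$ but length $m-1$, we get at most $\#A \cdot (\#A)^{(m-1)-|w|} = (\#A)^{m-|w|}$ runs — and this requires $|w| \le m-1$, which we must treat separately from $|w| = m$ (in the latter case a type-II state forces $0$ qualifying runs, since only $m-1 < |w|$ symbols can still be emitted, so the bound holds trivially). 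If $q$ is a type-I state and $|w| \ge 1$, then the first transition outputs the symbol $a$, which to contribute must equal $w[1]$; so only the single first symbol $a = w[1]$ is allowed, and the remainder must have $w[2{:}|w|]$ as a prefix of its output, giving at most $N(q\cdot w[1],\, m-1,\, w[2{:}|w|]) \le (\#A)^{(m-1)-(|w|-1)} = (\#A)^{m-|w|}$ by induction. If $q$ is type I and $|w| = 0$ we are back in the base case. Taking $q = p$ and $m = n$ gives the lemma.

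The main obstacle — though it is a minor one — is the careful bookkeeping of the boundary cases, namely distinguishing $|w| = m$ from $|w| < m$ and $|w| = 0$ from $|w| \ge 1$, so that the induction hypothesis is only ever invoked when its hypothesis $|w'| \le m'$ holds. The conceptual heart is simply that a type-II transition "wastes" a symbol (costing a factor $\#A$ in the count but not consuming a letter of $w$, consistent with the exponent $m-|w|$) while a type-I transition, once $w$ is nonempty, is forced and consumes exactly one letter of $w$ (keeping $m-|w|$ fixed). Obliviousness is what guarantees the dichotomy is per-state, so the case split is clean; without it one would have to mix the two behaviors at a single state and the simple bound would break.
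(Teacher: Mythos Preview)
Your proof is correct and follows essentially the same induction-on-length argument as the paper: split on whether the starting state is of type~I or type~II, use obliviousness to make this a per-state dichotomy, and in each case reduce to a shorter run with either the same~$w$ (type~II) or its tail (type~I). The only cosmetic difference is that you take $|w|=0$ as the base case for all~$m$, whereas the paper takes $n=0$ as the base and implicitly leaves the trivial case $n\ge 1$, type~I, $|w|=0$ unmentioned; your organization is arguably slightly cleaner on this point.
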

\begin{proof}
  The proof is carried out by induction on the integer~$n$.  If $n = 0$,
  the only possible word is the empty word~$\emptyword$.  Since there is
  only one run of length~$0$, the inequality is satisfied.  We now
  suppose that $n ⩾ 1$. Since the selector is oblivious, all transitions
  starting from state~$p$ have the same type, either type~I or type~II.

  We first suppose that all transitions starting from state~$p$ have type~I.
  Let us write $w = aw'$ where $a$ is a symbol and $w'$ a word.  Consider
  the transition $p \trans{a|a} q$.  All runs starting from~$p$ such that
  $w$ is a prefix of the output label must use this transition as a first
  transition.  Applying the induction hypothesis to $q$, $n-1$ and~$w'$
  gives the result.

  We now suppose that all transitions starting from state~$p$ have type~II,
  that is, have the form $p \trans{a|\emptyword} q_a$ for each symbol~$a$.
  This implies that all runs of length~$n$ starting from~$p$ have an output
  label of length at most $n-1$.  Therefore, if $|w| = n$, there is no run
  such that $w$ is prefix of its output label and the inequality is
  trivially satisfied.  If $|w| ⩽ n-1$, applying the induction hypothesis
  to each~$q_a$, $n-1$, and~$w$ gives that the number of runs starting
  from~$q_a$ such that $w$ is a prefix of their output label is at most
  $(\#A)^{n-1-|w|}$.  Summing up all these inequalities for all~$q_a$ gives
  the required inequality for~$p$.
\end{proof}

Some of the bounds are obtained using the ergodic theorem for Markov chains
\cite[Thm~4.1]{Bremaud08}.  For that purpose, we associate a Markov chain
$ℳ$ to each strongly connected automaton~$𝒜$.  For
simplicity, we assume that the state set~$Q$ of~$𝒜$ is the set
$\{1, …,\#Q\}$.  The state set of the Markov chain is the same set
$\{1, …,\#Q\}$.  The transition matrix of the Markov chain is the
matrix $P = (p_{i,j})_{1⩽ i,j ⩽ \#Q}$ where each entry~$p_{i,j}$ is
equal to $\#\{ a: i \trans{a} j\}/\#A$.  Note that
$\#\{ a: i \trans{a} j\}$ is the number of transitions from~$i$ to~$j$.
Since the automaton is assumed to be deterministic and complete, the
matrix~$P$ is stochastic.  If the automaton~$𝒜$ is strongly
connected, the Markov chain is irreducible and it has therefore a unique
stationary distribution~$π$ such that $π P = π$.  The vector~$π$
is called the \emph{distribution} of~$𝒜$.

By a slight abuse of notation, we let $\occ{p * w}{q}$ denote the number of
occurrences of the state~$q$ in the finite run $p * w$. The idea of the
following lemma is borrowed from~\cite{SchnorrStimm71}.

\begin{lemma} \label{lem:ergodic}
  Let $𝒜$ be a strongly connected deterministic and complete
  automaton and let $π$ be its distribution.  For each real numbers
  $ε,δ > 0$, there exists an integer~$N$ such that for each
  integer $n > N$
  \begin{displaymath}
    \#\left\{
      w ∈ A^n : ∃ p,q ∈ Q \;\;\bigl|\occ{p * w}{q}/n - π_q\bigr| > δ
    \right\} < ε(\#A)^n
  \end{displaymath}
\end{lemma}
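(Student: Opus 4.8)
The plan is to deduce the statement from the ergodic theorem for Markov chains applied to the chain~$ℳ$ associated to~$𝒜$. First I would fix the two states $p,q ∈ Q$ and work with them one pair at a time; since $Q$ is finite, a union bound over the $(\#Q)^2$ pairs will let me pass from a per-pair estimate to the final estimate (replacing $ε$ by $ε/(\#Q)^2$ along the way, which is harmless). For a fixed pair, consider generating a random word $w ∈ A^n$ by choosing each symbol uniformly and independently; then the random run $p * w$ is exactly a trajectory of the Markov chain~$ℳ$ started deterministically at state~$p$, because the transition probabilities $p_{i,j} = \#\{a : i \trans{a} j\}/\#A$ are precisely the probabilities that a uniformly chosen next symbol moves the automaton from~$i$ to~$j$ (here determinism and completeness are used, so that each symbol induces a well-defined next state). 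Under this coupling, $\occ{p * w}{q}$ is the number of visits to~$q$ in the first $n$ steps of the chain.

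Next I would invoke the ergodic theorem for irreducible finite Markov chains: since $𝒜$ is strongly connected, $ℳ$ is irreducible, hence has a unique stationary distribution~$π$, and the fraction of time spent in state~$q$ over the first~$n$ steps converges almost surely to~$π_q$, regardless of the (deterministic) starting state~$p$. Almost sure convergence of a sequence of $[0,1]$-valued random variables implies convergence in probability, so for the given~$δ>0$ there is an integer $N_{p,q}$ such that for all $n > N_{p,q}$ the probability that $|\occ{p * w}{q}/n - π_q| > δ$ is less than $ε/(\#Q)^2$. Translating back: the number of words $w ∈ A^n$ for which this bad event occurs is less than $\frac{ε}{(\#Q)^2}(\#A)^n$.

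Finally I would take $N = \max_{p,q} N_{p,q}$, which is finite since there are finitely many pairs. For $n > N$, the set of words $w$ for which there exist $p,q$ with $|\occ{p * w}{q}/n - π_q| > δ$ is contained in the union over all $(\#Q)^2$ pairs of the corresponding bad sets, each of size less than $\frac{ε}{(\#Q)^2}(\#A)^n$, so the union has size less than $ε(\#A)^n$, as required.

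I expect the only genuine subtlety to be making the coupling precise — that the uniform measure on $A^n$ pushed forward through the run map $w \mapsto (p * w)$ coincides with the law of the first $n$ steps of~$ℳ$ from~$p$. This needs determinism and completeness of~$𝒜$ (so the next state is a well-defined function of the current state and the symbol) and the fact that symbols are chosen uniformly and independently (so that, conditioned on the current state~$i$, the next state is~$j$ with probability $\#\{a : i \trans{a} j\}/\#A = p_{i,j}$, independently of the past). Everything else is a routine application of the ergodic theorem plus a union bound; there is no real obstacle beyond bookkeeping the quantifiers.
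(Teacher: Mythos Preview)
Your proposal is correct and follows exactly the approach the paper intends: the paper's proof consists of the single sentence ``The proof is a mere application of the ergodic theorem for Markov chains,'' and your write-up is precisely a careful unpacking of that application (the coupling between uniform words and chain trajectories, a.s.\ convergence to convergence in probability, and the union bound over the finitely many pairs $(p,q)$). There is nothing to add.
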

\begin{proof}
  The proof is a mere application of the ergodic theorem for
  Markov chains \cite[Thm~4.1]{Bremaud08}.
\end{proof}

The following corollary is also borrowed from~\cite{SchnorrStimm71}.
\begin{corollary} \label{cor:freqstate}
  Let $𝒜$ be a deterministic and strongly connected automaton
  and let $π$ its distribution.  Let $ρ$ the run of~$𝒜$
  on a normal sequence~$x$.  Then for each state~$q$
  \begin{displaymath}
    \lim_{n → ∞}{\frac{\occ{ρ[1{:}n]}{q}}{n}} = π_q.
  \end{displaymath}
  where $ρ[1{:}n]$ is the finite run made of the first $n$ transitions
  of~$ρ$
\end{corollary}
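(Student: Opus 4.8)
The plan is to compare the run~$\rho$ with the Markov chain associated with~$\mathcal{A}$ by cutting the input~$x$ into consecutive blocks of a large but fixed length and using the fact that, by Lemma~\ref{lem:ergodic}, along a typical block of that length the empirical state frequencies are close to~$\pi$ \emph{regardless of the state from which the block is entered}. That uniformity over starting states is the whole point, since in the run~$\rho$ the state at which each block begins is not under our control.

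Fix a state~$q$ and a target precision $\eta>0$. First I would apply Lemma~\ref{lem:ergodic} with $\delta=\varepsilon=\eta/8$ to get an integer~$N$, then fix some $n>N$ that is also larger than $2/\eta$, and let $B\subseteq A^n$ be the set of words~$w$ for which some states $p',q'$ satisfy $\bigl|\occ{p'*w}{q'}/n-\pi_{q'}\bigr|>\delta$. The lemma gives $\#B<\varepsilon(\#A)^n$, and, by the very definition of~$B$, for every $w\notin B$ and every state~$p'$ one has $\occ{p'*w}{q}\in[\,n(\pi_q-\delta),\,n(\pi_q+\delta)\,]$, a bound that is independent of~$p'$.

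Next, factor $x=w_1w_2w_3\cdots$ with $|w_i|=n$, and let $p_i$ be the state of~$\rho$ just before $w_i$ is read, so that $\rho[1{:}Kn]$ decomposes into the sub-runs $p_1*w_1,\dots,p_K*w_K$ which overlap only in one state at each of the $K-1$ internal boundaries; hence
\begin{displaymath}
  \occ{\rho[1{:}Kn]}{q}=\sum_{i=1}^{K}\occ{p_i*w_i}{q}-r_K,\qquad 0\le r_K<K.
\end{displaymath}
I would then split this sum over $\{i:w_i\notin B\}$ and $\{i:w_i\in B\}$: a term of the first kind lies in $[\,n(\pi_q-\delta),\,n(\pi_q+\delta)\,]$, and a term of the second kind lies in $[\,0,\,n+1\,]$. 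Because $x$ is normal, hence generic for the uniform measure, and the uniform measure is Markovian, $x$ is also generic for \emph{aligned} occurrences \cite{AlvarezCarton19}; therefore $\#\{i\le K:w_i\in B\}/K$ converges to $\#B\,(\#A)^{-n}<\varepsilon$, so this count is below~$\varepsilon K$ for all large~$K$. Dividing the identity above by~$Kn$ and inserting these bounds yields
\begin{displaymath}
  \Bigl|\frac{\occ{\rho[1{:}Kn]}{q}}{Kn}-\pi_q\Bigr|\le\delta+2\varepsilon+\frac{1}{n}<\eta
\end{displaymath}
for all sufficiently large~$K$.

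Finally, for arbitrary~$m$ write $Kn\le m<(K+1)n$; then $\occ{\rho[1{:}m]}{q}$ and $\occ{\rho[1{:}Kn]}{q}$ differ by at most~$n$, and $m$ and $Kn$ differ by at most~$n$, so $\occ{\rho[1{:}m]}{q}/m$ differs from $\occ{\rho[1{:}Kn]}{q}/(Kn)$ by $O(1/K)$, which is eventually below~$\eta$; hence $\limsup_m\bigl|\occ{\rho[1{:}m]}{q}/m-\pi_q\bigr|\le 2\eta$, and letting $\eta\to0$ gives the corollary. I expect the only genuinely delicate point to be the step that converts the plain normality of~$x$ into a statement about the blocks~$w_i$ of the fixed-length factorization (equivalently, the switch from ordinary to aligned genericity); everything else is bookkeeping around Lemma~\ref{lem:ergodic}.
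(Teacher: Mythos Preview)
Your proof is correct and follows essentially the same route as the paper's: apply Lemma~\ref{lem:ergodic} to fix a block length whose bad set~$B$ is small, factorize~$x$ into aligned blocks of that length, invoke \cite{AlvarezCarton19} to control the aligned block frequencies, and sum the per-block state counts. The only cosmetic difference is that the paper establishes just the one-sided bound $\liminf_n \occ{\rho[1{:}n]}{q}/n \ge \pi_q$ for each~$q$ and then deduces equality from $\sum_q \pi_q = 1$, whereas you carry both the upper and lower estimates through directly.
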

\begin{proof}
  Since $\sum_{q ∈ Q}{π_q} = 1$, it suffices to prove that
  $\limsup_{n → ∞}{\occ{ρ[1{:}n]}{q}/n}  ⩾ π_q$ holds for each
  state~$q$.

  Let $ε>0$ be a positive real number.  Applying
  Lemma~\ref{lem:ergodic} with $δ = ε$ provides an
  integer~$k$ such that
  \begin{displaymath}
    B = \left\{
        w ∈ A^k :
        ∃ p \;\;\bigl|\occ{p * w}{q}/k - π_q\bigr| > ε
    \right\}
  \end{displaymath}
  has cardinality at most $ε(\#A)^n$.
  The run~$ρ$ is then factorized
  \begin{displaymath}
    ρ = p_0 \trans{w_0} p_1 \trans{w_1} p_2  \trans{w_2} p_3 ⋯
         = (q_0 * w_0)(q_1 * w_1)(q_2 * w_2)⋯
  \end{displaymath}
  where each word~$w_i$ is of length~$k$ and $x = w_0w_1w_2⋯$.  Since $x$
  is normal, there is, by Theorem~4 in~\cite{AlvarezCarton19}, an
  integer~$N$ such that for each $n > N$ the cardinality of the set
  $\{ i < n : w_i = w \}$ is greater than $(1-ε)n/(\#A)^{k}$ for each word
  $w ∈ A^k$.

  \begin{align*}
    \limsup_{n → ∞}{\frac{\occ{ρ[1{:}n]}{q}}{n}} & =
           \lim_{n → ∞}{\frac{\occ{ρ[1{:}nk]}{q}}{nk}} \\
    & = \frac{1}{nk}\sum_{i = 0}^{n-1}{\occ{q_i * w_i}{q}} \\
    & ⩾ \frac{1}{nk}\sum_{w ∈ A^k}{\#\{ i < n : w_i = w \}
        × \min_{p∈ Q}\occ{p * w}{q}} \\
    & ⩾ \frac{1}{nk}\sum_{w ∈ A^k \setminus B}
         {((1-ε)n/(\#A)^{k})(k(π_q-ε))} \\
    & = (1-ε)^2(π_q-ε)
  \end{align*}
  Since this inequality holds for each real number $ε>0$, we have
  proved that $\limsup_{n → ∞}{\occ{ρ[1{:}n]}{q}/n} ⩾ π_q$.
\end{proof}

Using the terminology of Markov chains, a strongly connected component
(SCC) of an automaton is called \emph{recurrent} if it cannot be left.
This means that there is no transition $p \trans{a} q$ where $p$ is in that
component and $q$ is not.  The following lemma is Satz~2.5
in~\cite{SchnorrStimm71}.
\begin{lemma} \label{lem:recurscc}
  Let $𝒜$ be an automaton and let $ρ$ be a run of~$𝒜$ on a normal input
  sequence.  The run~$ρ$ reaches a recurrent SCC of~$𝒜$.
\end{lemma}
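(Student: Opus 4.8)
The plan is to argue by contradiction: suppose the run $\rho$ visits infinitely often a transition $p \trans{a} q$ that leaves a strongly connected component, with $p$ in that component and $q$ outside it. (If $\rho$ stayed within a single SCC forever, that SCC would have to be recurrent, so $\rho$ reaches a recurrent SCC; the only way to avoid this is to keep leaving SCCs, and since there are finitely many states, some particular ``exit transition'' $p\trans{a}q$ must be used infinitely often — in fact the set of exit transitions used infinitely often is non-empty, and we pick one.) The key observation is that once such an exit transition is taken, the run can never return to $p$, since $q$ lies in a different SCC and $p$'s SCC cannot be re-entered. Hence $p\trans{a}q$ can occur at most once along $\rho$, contradicting that it occurs infinitely often.

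The only subtlety is to turn ``$\rho$ does not stay inside one SCC'' into ``some fixed exit transition occurs infinitely often,'' which is where normality of the input is used, via Corollary~\ref{cor:freqstate}. Concretely: let $C$ be the set of states visited infinitely often by $\rho$. After finitely many steps, $\rho$ remains within $C$, so the restriction of the underlying graph to $C$ must be such that from any state of $C$ every other state of $C$ is reachable using only states of $C$ — otherwise $\rho$ would abandon part of $C$ — so $C$ is contained in a single SCC, and moreover no transition goes from $C$ to a state outside $C$ that is then revisited. Thus the SCC containing $C$ is reached by $\rho$, and since $\rho$ never leaves it after some point and it is the ``last'' SCC visited along $\rho$, it is recurrent. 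This gives the claim directly, and one does not even need the quantitative statement of Corollary~\ref{cor:freqstate}; the purely topological fact that $\rho$ eventually confines itself to the states it visits infinitely often suffices. (The corollary is nevertheless the natural tool if one prefers to phrase the argument through frequencies: the states visited infinitely often are exactly those with positive frequency, and the frequency vector $\pi$ is supported on a recurrent class.)

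The main obstacle, such as it is, is bookkeeping about SCCs: one must be careful that ``$C$ lies in a single SCC'' really follows from ``$\rho$ visits every state of $C$ infinitely often.'' This holds because for any two states $r, s \in C$, between two successive visits to $r$ the run passes through $s$ at some later visit, which traces a path $r \leadsto s$ inside the states used from that point on, all of which lie in $C$; symmetrically $s \leadsto r$, so $r$ and $s$ are in the same SCC and that SCC is entirely contained in the set of states reachable from $C$ within $C$. Finally, recurrence: the SCC $S \supseteq C$ cannot have an edge leaving it to a state $t \notin S$, because $t$ would not be in $C$ (as $t \notin S \supseteq C$) yet if such an edge were taken infinitely often $t$ would be visited infinitely often, forcing $t \in C$, a contradiction; and if it were taken only finitely often it is irrelevant to the tail of $\rho$. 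Hence $S$ is recurrent and $\rho$ reaches it, completing the proof.
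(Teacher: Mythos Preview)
Your argument has a genuine gap: you never meaningfully use normality, and the conclusion fails without it. The claim ``if $\rho$ stayed within a single SCC forever, that SCC would have to be recurrent'' is false. Take states $\{1,2,3\}$ with $1\trans{0}1$, $1\trans{1}2$, $2\trans{0}1$, $2\trans{1}3$, $3\trans{0}3$, $3\trans{1}3$; the SCC $\{1,2\}$ is non-recurrent (via the exit edge $2\trans{1}3$), yet on the input $(01)^\omega$ the run stays in $\{1,2\}$ forever. Your set~$C$ of infinitely visited states is $\{1,2\}=S$, and your third paragraph would wrongly conclude that $S$ is recurrent. The error is the clause ``if it were taken only finitely often it is irrelevant to the tail of~$\rho$'': \emph{recurrent} is a structural property of the automaton---the SCC has no outgoing edges at all---not a property of a particular run. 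An edge that exists but is never traversed by~$\rho$ still witnesses that $S$ is non-recurrent. So you have shown that $\rho$ eventually settles in some SCC~$S$, but not that $S$ has no exit edges.

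What is missing is precisely where normality enters. If $S$ is non-recurrent, one constructs a single word~$w$ such that reading~$w$ from \emph{any} state of~$S$ drives the run out of~$S$ (concatenate, for each state of~$S$ in turn, a word that leaves~$S$ from the state reached so far; this is the construction in the proof of Lemma~\ref{lem:markov-recurscc}). A normal input contains every block infinitely often---and, as the paper remarks right after the lemma, this weak consequence of normality is all that is needed---so $w$ is eventually read after $\rho$ has settled in~$S$, forcing~$\rho$ to leave~$S$: contradiction. Your parenthetical appeal to Corollary~\ref{cor:freqstate} does not rescue the argument either, since that corollary assumes the automaton is strongly connected, which is exactly what is at stake here.
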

The hypothesis that the input sequence is normal is stronger than what is
required.  It suffices that each block has infinitely many occurrences in
the sequence.

\begin{lemma} \label{lem:equirun}
  Let $𝒮$ be a strongly connected selector.  For each integer~$k$
  and each real number $ε > 0$, there exists an integer~$N$ such
  that for each integer~$n > N$, each state~$p$ and each word~$w$ of
  length~$k$, the number of runs $p \trans{u|v} q$ of length~$n$ such that
  $w$ is a prefix of the output label~$v$ is between
  $(1-ε)(\#A)^{n-|w|}$ and $(\#A)^{n-|w|}$.
\end{lemma}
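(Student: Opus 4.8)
The plan is to combine the upper bound from Lemma~\ref{lem:upper} with a matching lower bound derived from Corollary~\ref{cor:freqstate}. The upper bound $(\#A)^{n-|w|}$ is already given by Lemma~\ref{lem:upper}, so the entire work is in the lower bound: showing that at least a $(1-\varepsilon)$ fraction of those potential runs actually occur. Fix $k$ and $\varepsilon$. The key observation is that a run of length~$n$ from~$p$ outputs a word with prefix~$w$ (where $|w| \le k$) if and only if, among its $n$ transitions, the type-I transitions produce exactly the symbols of~$w$ in order (for the first $|w|$ of them) — and crucially, in an oblivious selector the \emph{type} of each transition depends only on the current state, not on the input symbol. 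So whether a given input word $u \in A^n$ yields such a run is determined by looking at the sequence of states visited: at the first $|w|$ states of type~I along the run, the input symbol must equal the corresponding symbol of~$w$; at the remaining positions the input symbol is free.

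First I would make this precise. Let $Q_{\mathrm{I}}$ be the set of states all of whose outgoing transitions have type~I. Along the run $p \cdot u$ of a word $u \in A^n$, let the run visit states $p = r_0, r_1, \dots, r_{n-1}$ before reading the $n$-th symbol. The run outputs a word having $w$ as a prefix precisely when: letting $i_1 < i_2 < \cdots$ enumerate the indices $j$ with $r_j \in Q_{\mathrm{I}}$, we have at least $|w|$ such indices among $0, \dots, n-1$ and $u[i_t+1] = w[t]$ for $t = 1, \dots, |w|$. So the count of such runs equals $\sum$ over "admissible state-visit patterns" of $(\#A)^{(\text{number of free positions})}$. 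The naive count, if \emph{every} run visited $Q_{\mathrm{I}}$ at least $|w|$ times in its first $n - k$ steps, would give exactly $(\#A)^{n-|w|}$ (each of the $|w|$ constrained positions kills a factor $\#A$). The deficit comes only from runs that visit $Q_{\mathrm{I}}$ fewer than $|w| \le k$ times in their first $n-k$ steps. Hence it suffices to show: for $n$ large, the number of input words $u \in A^{n}$ such that the run $p \cdot u$ visits $Q_{\mathrm{I}}$ at most $k$ times in its first $n-k$ transitions is at most $\varepsilon (\#A)^{n}$.

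This is where Corollary~\ref{cor:freqstate} and Lemma~\ref{lem:ergodic} enter. Viewing $\mathcal{S}$ as an automaton (forgetting outputs), it is strongly connected, hence has a distribution~$\pi$, and since every state lies on an accepting run and the selector actually performs selection by an infinite language, $\sum_{q \in Q_{\mathrm{I}}} \pi_q = c > 0$. Apply Lemma~\ref{lem:ergodic} with this automaton, with $\delta$ small enough that $c - \#Q\cdot\delta > 0$ (so that visiting $Q_{\mathrm{I}}$ has a positive guaranteed frequency), and with $\varepsilon$ as given: for $m > N$, all but $\varepsilon(\#A)^m$ words $u' \in A^m$ have the property that along $p \cdot u'$ every state~$q$ occurs at least $(\pi_q - \delta)m$ times, hence $Q_{\mathrm{I}}$ is visited at least $(c - \#Q\cdot\delta)m$ times, which exceeds $k$ once $m$ is large. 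Taking $n$ large enough that $n - k > \max(N, k/(c-\#Q\cdot\delta))$ and padding the "bad" words $u' \in A^{n-k}$ arbitrarily to length~$n$ (a factor $(\#A)^k$, absorbed by adjusting the $\varepsilon$ in Lemma~\ref{lem:ergodic} down by $(\#A)^{-k}$), we conclude that the set of length-$n$ inputs whose runs visit $Q_{\mathrm{I}}$ at most $k$ times in the first $n-k$ steps has size at most $\varepsilon(\#A)^n$. Combined with the counting identity, this gives the lower bound $(1-\varepsilon)(\#A)^{n-|w|}$, and with Lemma~\ref{lem:upper} the proof is complete.

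The main obstacle I anticipate is the bookkeeping in the counting identity — cleanly accounting for the fact that a run might visit $Q_{\mathrm{I}}$ \emph{more} than $|w|$ times (only the first $|w|$ visits are constrained, later type-I transitions are unconstrained) and pinning down exactly which input words are lost. The cleanest way is probably to set up a bijection-style argument: partition $A^n$ according to the "type pattern" $\tau \in \{\mathrm{I},\mathrm{II}\}^{n}$ induced along the run (which is itself determined by $u$), note that for a fixed achievable pattern $\tau$ with at least $|w|$ symbols of type~$\mathrm{I}$ in its first $n-k$ coordinates the number of $u$ realizing $\tau$ and outputting a word with prefix $w$ is exactly $(\#A)^{-|w|}$ times the number realizing $\tau$ — so summing over all "good" patterns recovers $(\#A)^{n-|w|}$ minus the contribution of "bad" patterns, which is what the ergodic estimate bounds. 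One should be careful that the type pattern is genuinely a function of the prefix of~$u$ read so far (true, by obliviousness and input-determinism), so the constraints "$u[i_t+1] = w[t]$" are imposed at positions that are themselves determined before those positions are read, making the factor-$(\#A)^{-1}$-per-constraint accounting legitimate.
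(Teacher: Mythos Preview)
Your overall plan coincides with the paper's: take the upper bound from Lemma~\ref{lem:upper}, use Lemma~\ref{lem:ergodic} to show that all but an $\varepsilon$-fraction of length-$n$ inputs produce an output of length at least~$k$, and combine. The gap is in the combination step. Your proposed counting identity --- that for a fixed type pattern~$\tau$ with at least $|w|$ type-I positions, exactly a $(\#A)^{-|w|}$ fraction of the inputs realizing~$\tau$ output a word with prefix~$w$ --- is false. The reason is precisely the entanglement you flag but dismiss in your last paragraph: constraining the input symbol at the $t$-th type-I position changes the next state and hence the \emph{subsequent} pattern, so the set $\{u:\tau(u)=\tau\}$ is not a product set on which the $|w|$ constraints act independently. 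A concrete strongly connected oblivious selector witnessing this: $A=\{0,1\}$, states $p,q$ of type~I and $r$ of type~II, with $p\trans{0|0}q$, $p\trans{1|1}r$, $q\trans{a|a}p$ and $r\trans{a|\emptyword}p$ for $a\in A$. From~$p$ with $n=2$ and $w=0$, the pattern $(\mathrm{I},\mathrm{I})$ is realized by $\{00,01\}$ and \emph{both} have output prefix~$0$; the pattern $(\mathrm{I},\mathrm{II})$ is realized by $\{10,11\}$ and \emph{neither} does. Neither ratio equals $1/2$.

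The paper closes this step with a one-line complement argument that bypasses any bijection. Every length-$n$ run either outputs fewer than~$k$ symbols or has a unique length-$k$ output prefix, so
\[
(\#A)^n \;=\; \bigl|\{u:|v|<k\}\bigr| \;+\; \sum_{w'\in A^k}\bigl|\{u:w'\sqsubseteq v\}\bigr|.
\]
Bounding the first term by $\varepsilon(\#A)^{n-k}$ (your ergodic estimate, which is essentially the paper's) and each summand for $w'\neq w$ by $(\#A)^{n-k}$ (Lemma~\ref{lem:upper} again) immediately gives $\bigl|\{u:w\sqsubseteq v\}\bigr|\ge(1-\varepsilon)(\#A)^{n-k}$. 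Replace your pattern-counting paragraph with this and the proof goes through.
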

\begin{proof}
  Let $p$ be any state. The upper bound $(\#A)^{n-|w|}$ has been already
  proved in Lemma~\ref{lem:upper}.  It remains to prove the lower bound.

  Let fix a state~$q$ such that the transitions starting from~$q$ are of
  type~I.  If no such state exists, all transitions of the selector outputs
  the empty word and and the output label of any run is empty.
  Applying Lemma~\ref{lem:ergodic} with $ε/(\#A)^k$ and $δ =
  π_q/2$ provides an integer $N_0$ such that for each $n > N_0$, the set
  \begin{displaymath}
    B = \left\{
        u ∈ A^n : \;\;\bigl|\occ{p * u}{q}/n - π_q\bigr| > π_q/2
    \right\}
  \end{displaymath}
  has cardinality at most $ε(\#A)^{n-k}$.  Fix now
  $N = \max(N_0,2k/π_p)$ and let $n$ be such that $n > N$.  If a word~$u$
  of length~$n$ does not belong to~$B$, the run $p * u$ satisfies
  $\occ{p * u}{q} > nπ_q/2 ⩾ k$.  This implies that the length
  of its output label is greater than $k$.  Indeed, the state~$q$ has
  at most $k+1$ occurrences in the run and each transition starting from~$q$
  outputs one symbol.

  Consider the $(\#A)^n$ runs of the form $p * u$ for $u$ of length~$n$.
  Among these runs, at most $ε(\#A)^{n-k}$ many of them do not have
  an output greater than~$k$.  For each $w' \neq w$, $w'$ is the prefix of
  the output label of at most $(\#A)^{n-k}$ many of them.  It follows that
  $w$ is the prefix of the output label of at least
  $(1-ε)(\#A)^{n-k}$ many of them.
\end{proof}

Let $𝒜$ be an automaton with state set $Q$.  We now define and
automaton whose states are the run of length~$n$ in~$𝒜$.  We let
$𝒜^n$ denote the automaton whose state set is $\{ p * w : p ∈
Q, w ∈ A^n\}$ and whose set of transitions is defined by
\begin{displaymath}
  \left\{
    {(p * bw) \textstyle\trans{a} (q * wa)} :
    {p \textstyle\trans{b} q}\text{ in $𝒜$}, \;\; a,b ∈ A
    \text{ and }w ∈ A^{n-1}
  \right\}
\end{displaymath}

The Markov chains associated with the automaton~$𝒜^n$ is called the
\emph{snake} Markov chains.  See Problems 2.2.4, 2.4.6 and 2.5.2 (page~90)
in \cite{Bremaud08} for more details.  It is pure routine to check that the
distribution~$ξ$ of~$𝒜^n$ is given by $ξ_{p * w} = π_p/(\#A)^n$ for each
state~$p$ and each word~$w$ of length~$n$.

\begin{proof}[Proof of theorem~\ref{thm:selection}]
  Let $y$ be the output of the run of~$𝒮$ on~$x$.  By
  Lemma~\ref{lem:recurscc}, the run of~$𝒮$ on~$x$ reaches a recurrent
  SCC.  Therefore it can be assumed without loss of generality
  that the selector~$𝒮$ is strongly connected.

  Let $k$ be a fixed integer.  We claim that for each word~$w$ of length~$k$
  $\lim_{n → ∞} \occ{y[1{:}n]}{w}/n = 1/(\#A)^k$.  With each occurrence of a
  word~$w$ of length~$k$ in~$y$, we associate the occurrence of the
  state~$q$ in the run at which starts the transition that outputs the
  first symbol of~$w$.  Note that transitions starting from~$q$ must be of
  type~I.  Conversely, with each occurrence in the run of such a state, we
  associate the block of length~$k$ of~$y$ starting from that position.

  We fix a state~$p$ such that transitions starting from~$p$ have type~I.
  We first claim that for each integer~$n$ all runs of length~$n$ starting
  from~$p$ have the same frequency in the run.  To prove this claim, we
  apply Corollary~\ref{cor:freqstate} to the automaton~$𝒜^n$ where $𝒜$ is
  the automaton obtained by removing the outputs from~$𝒮$.

  Let $ε>0$ be a positive real number.  By Lemma~\ref{lem:equirun}, there
  is an integer~$n$ such that for each~$w$ on length~$k$, the number of run
  starting from~$p$ outputting $w$ as their first $k$ symbols is between
  $(1-ε)(\#A)^{n-k}$ and~$(\#A)^{n-k}$.  Combining this result with the fact
  that all these runs of length~$n$ have the same frequency, we get that
  the frequency of of each $w$ is between $(1-ε)(\#A)^{-k}$ and~$(\#A)^{k}$.
  Since this is true for each $ε>0$, all words of length~$k$ have the same
  frequency after an occurrence of~$p$.  Since this is true for each
  state~$p$, we get that all words of length~$k$ have the same frequency
  in~$y$.
\end{proof}

\section{Genericity for Markov measures} \label{sec:markov}

In this section we extend Agafonov's result to the more general setting of
shifts of finite type.  In this context, normality is defined through the
Parry measure which is the unique invariant and compatible measure with
maximal entropy.  A sequence is said to be normal if it is generic for that
measure.  We actually prove a slightly stronger result by showing that
genericity for any Markov measure is preserved by finite state selection as
long as the selection is compatible with the measure.  This includes the
case of shifts of finite type because their Parry measure is Markovian.

To obtain such a result, the selection must be perfomed in a compatible way
with the measure and its support.  This boils down to putting some
constraints on the selector to guarantee that if the input sequence is in
the support of the measure, then the output sequence is also in that
support.  Insuring that the output is still in the support is not enough as
it is shown by the following example.  Consider the golden mean shift~$X$
and the selector pictured in Figure~\ref{fig:select1}.  This selector
selects symbols following a~$1$.  If the input sequence~$x$ is in~$X$, the
sequence~$y$ of selected symbols is $0^ℕ = 000⋯$ since $x$ has no
consecutive~$1$s.  Therefore, $y$ is always in~$X$ but genericity is lost.
To prevent this problematic behaviour, the selector is only allowed to
select the next symbol if the last read symbol and the last selected symbol
coincide.  This restriction rules out the previous selector because it does
satisfies this property.

We suppose that a markov measure $μ = μ_{π,P}$ is fixed and we let $X_μ$ be
its support.  We introduce automata and selectors which are compatible with
the shift~$X_μ$.  An automaton~$𝒜$ is compatible with~$X_μ$ if there exists
a function~$ι$ from its state set~$Q$ to~$A$ such that the following
condition is fulfilled.
\begin{itemize} \itemsep0cm
\item[i)] If $p \trans{a} q$ is a transition of~$𝒜$, then $P_{ι(p)a} > 0$
  and $ι(q) = a$.
\end{itemize}
The condition implies that all transitions arriving to a given state~$q$
have the same label~$ι(q)$ and that the label of any path is in the
shift~$X_μ$.  Such an automaton is called $X_μ$-complete if for each pair
$(p,a)$ such that $P_{ι(p)a} > 0$, there exists a transition
$p \trans{a} q$ for some state~$q$.

We continue by defining selectors which are compatible with~$X_μ$.  A
selector~$𝒮$ is compatible with~$X_μ$ exists two functions $ι$ and~$η$ from
its state set~$Q$ to the alphabet~$A$ such that the following two
conditions are fulfilled.
\begin{itemize} \itemsep0cm
\item[i)] If $p \trans{a|a} q$ is a transition of type~I, then $P_{ι(p)a} > 0$,
  $ι(q) = η(q) = a$, and $η(p) = ι(p)$ 
\item[ii)] If $p \trans{a|\emptyword} q$ is a transition of type~II, then
  $P_{ι(p)a} > 0$, $ι(q) = a$ and $η(q) = η(p)$.
\end{itemize}
The condition $P_{ι(p)a} > 0$ states that the selector can only read
consecutive symbols with non-zero transition probability.  The condition
$η(p) = ι(p)$ for the transition $p \trans{a|a} q$ states the last read and
last selected symbols must coincide for the selector to be able to select.
The other conditions states that $ι(q)$ is always the last read symbol, and
that $η(q)$ is the last selected symbol if there is one and that it is equal
to~$η(p)$ otherwise.

\begin{figure}[htbp]
  \begin{center}
  \begin{tikzpicture}[->,>=stealth',initial text=,semithick,auto,inner sep=1.5pt]
    \tikzstyle{every state}=[minimum size=0.4]
    \node[state, initial above] (qE00) at (2,2) {000};
    \node[state] (qE01) at (6,2) {001};
    \node[state] (qE10) at (0,0) {010};
    \node[state] (qE11) at (4,0) {011};
    \node[state] (qO00) at (2,0) {100};
    \node[state] (qO01) at (6,0) {101};
    \node[state] (qO10) at (0,2) {110};
    \node[state] (qO11) at (4,2) {111};
    \path (qE00) edge[bend left=10] node {$0|\emptyword$} (qO00);
    \path (qE00) edge[bend left=10] node {$1|\emptyword$} (qO10);
    \path (qE01) edge[bend left=10] node {$0|\emptyword$} (qO01);
    \path (qE01) edge node[swap] {$1|\emptyword$} (qO11);
    \path (qE10) edge node[swap] {$0|\emptyword$} (qO00);
    \path (qE10) edge[bend left=10,dashed] node {$1|\emptyword$} (qO10);
    \path (qE11) edge[bend left=10] node {$0|\emptyword$} (qO01);
    \path (qE11) edge[bend left=10,dashed] node {$1|\emptyword$} (qO11);
    \path (qO00) edge[bend left=10] node {$0|0$} (qE00);
    \path (qO00) edge node[swap] {$1|1$} (qE11);
    \path (qO01) edge[bend left=10] node {$0|\emptyword$} (qE01);
    \path (qO01) edge[bend left=10] node {$1|\emptyword$} (qE11);
    \path (qO10) edge[bend left=10] node {$0|\emptyword$} (qE00);
    \path (qO10) edge[bend left=10,dashed] node {$1|\emptyword$} (qE10);
    \path (qO11) edge node[swap] {$0|0$} (qE00);
    \path (qO11) edge[bend left=10,dashed] node {$1|1$} (qE11);
  \end{tikzpicture}
  \end{center}
  \caption{A selector compatible with the golden mean shift}
  \label{fig:select2}
\end{figure}
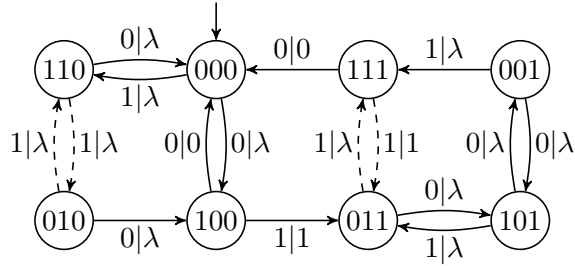

The selector pictured in Figure~\ref{fig:select2} is compatible with the
golden mean shift.  It selects symbols at even positions (starting
from~$1$) if it is possible, that is, if the last read symbol and the last
selected symbol coincide.  The dashed edges are useless if the input
sequence is in the golden mean shift.  In that case, the output sequence is
also in the golden mean shift.  Each state is labelled by~$prs$ where $p ∈
\{0, 1\}$ is the parity of the number of read symbols so far, $r ∈ \{0,
1\}$ is the last read symbol and $s ∈ \{0, 1\}$ the last selected symbol.
The two functions $ι$ and~$η$ can be defined by $ι(prs) = r$ and $η(prs) =
s$.

The following theorem states that selection with compatible selectors
preserves genericity for Markov measures.  The input sequence~$x$ must be
assumed to be in the shift~$X_μ$ because compatible selectors only read
sequences from~$X_μ$.
\begin{theorem} \label{thm:markov-selection}
  Let $μ$ be a Markov measure and let $x$ be a sequence in~$X_μ$ which is
  $μ$-generic.  For each oblivious selector~$𝒮$ compatible with~$X_μ$, the
  output~$𝒮(x)$ of~$𝒮$ on~$x$ belongs to~$X_μ$ and is $μ$-generic.
\end{theorem}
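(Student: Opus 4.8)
The plan is to mirror the structure of the proof of Theorem~\ref{thm:selection}, upgrading every step from the uniform measure to the Markov measure $μ=μ_{π,P}$ and from plain automata to $X_μ$-compatible ones. First I would verify the easy half: if $x\in X_μ$ and $𝒮$ is compatible with $X_μ$, then $𝒮(x)\in X_μ$. This is immediate from the compatibility conditions: along a run, condition~(i) and~(ii) guarantee $P_{ι(p)a}>0$ for each transition, so the read word lies in $X_μ$; and whenever two symbols $a=ι(q)$ and $b$ are selected consecutively, the state $q$ emitting $a$ has $η(q)=a$, and the later type-I transition emitting $b$ starts from a state $p'$ with $η(p')=ι(p')$, while $η$ is unchanged by the intervening type-II transitions, so $ι(p')=η(p')=η(q)=a$; condition~(i) then gives $P_{ab}=P_{ι(p')b}>0$. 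Hence the output has no forbidden block.

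Next I would replay the combinatorial machinery with $μ$-weights replacing counting. The analogue of Lemma~\ref{lem:upper} should read: for a state $p$ the $μ$-measure of the set of words $u$ of length $n$ whose run $p*u$ outputs $w$ as a prefix is at most $μ_{ι(p)}$-conditioned weight times $P_{\text{(last symbol of }w)\,\ast}^{\,n-|w|}$-type factor — more precisely it is at most the total $P$-conditional mass $\mu(w\mid \text{starting symbol }ι(p))$ of extensions, which telescopes to something like $\prod$ of the $P_{w[i]w[i+1]}$ over the $|w|$ selected steps, with the unselected steps contributing a full sum over the stochastic matrix and hence factor $1$. The cleanest way is to weight each length-$n$ run $p*u$ by the conditional Markov probability $P_{ι(p)\,u[1]}P_{u[1]u[2]}\cdots P_{u[n-1]u[n]}$; then the measure of the cylinder of words whose run outputs $w$ as an output-prefix is at most $P^{*}_{ι(p)\to w[1]}\cdot P_{w[1]w[2]}\cdots P_{w[|w|-1]w[|w|]}$, proved by the same induction on $n$ (type~I removes one factor corresponding to a selected step; type~II sums to $1$ over the read symbol). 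The ergodic input, Lemma~\ref{lem:ergodic} and Corollary~\ref{cor:freqstate}, carries over verbatim once the Markov chain associated to the compatible automaton $𝒜$ is taken with transition probabilities $P_{ι(p)a}$ on the edge $p\trans{a}q$; this is stochastic by $X_μ$-completeness, irreducible on each recurrent SCC, and its stationary distribution $π^{𝒜}$ exists. The reference to Theorem~4 of~\cite{AlvarezCarton19} applies because $μ$ is Markovian, so aligned and unaligned genericity agree, giving that each block $w$ of length $k$ occurs with aligned frequency tending to $μ(w)$ along $x$.

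Then I would prove the Markov analogue of Lemma~\ref{lem:equirun}: for a strongly connected compatible selector, a fixed $k$, and $ε>0$, for all large $n$, each state $p$ and each word $w$ of length $k$ that is \emph{$μ$-admissible after $η(p)$} (i.e. $P_{η(p)w[1]}>0$ and $w$ has no forbidden block), the $P$-conditional mass of length-$n$ runs from $p$ outputting $w$ as a prefix is between $(1-ε)$ and $1$ times $P_{η(p)w[1]}P_{w[1]w[2]}\cdots P_{w[k-1]w[k]}$. The argument is the same: pick a type-I state $q$, use the ergodic lemma to discard a set of small conditional mass whose runs fail to produce $\ge k$ selected symbols, then among the remaining runs the total output-prefix mass is $1$ (up to $ε$), distributed among admissible $w$ with weights exactly the conditional Markov weights $P_{η(p)w[1]}\prod P_{w[i]w[i+1]}$ by the upper bound applied to each competing $w'$. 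Finally, as in the proof of Theorem~\ref{thm:selection}, I would apply Corollary~\ref{cor:freqstate} to the snake automaton $𝒜^n$ of the (outputs-erased) compatible automaton to conclude that all length-$n$ runs from a fixed type-I state $p$ occur with the correct frequency $π^{𝒜}_p/(\#A)^n$-analogue (now $π^{𝒜^n}_{p*u}$ proportional to the conditional Markov weight of $u$), and combine this with the equidistribution lemma to get that each admissible block $w$ of length $k$ appears in $𝒮(x)$ with frequency $μ(w)$. Summing over states $p$ with type~I transitions and over output positions yields $μ$-genericity of $𝒮(x)$.

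The main obstacle is bookkeeping the \emph{conditioning} correctly: unlike the uniform case, the weight of a run depends on both its endpoints through $P$, and the function $η$ must be used (not $ι$) to decide which blocks $w$ are legal as outputs after a given state, since the last \emph{selected} symbol — not the last \emph{read} symbol — constrains the next selected symbol. One must check that the equidistribution lemma's lower bound is stated relative to $η(p)$-admissibility and that the sum $\sum_w P_{η(p)w[1]}\prod_{i}P_{w[i]w[i+1]}=1$ (sum over all $μ$-admissible $w$ of length $k$), so that ``almost all'' output mass is accounted for; this uses stochasticity of $P$ restricted to the support, i.e. $\sum_b P_{ab}=1$ for each $a$ with $π_a>0$. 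A second, more delicate point is ensuring the snake Markov chain on $𝒜^n$ is irreducible \emph{after restricting to states reachable from a recurrent SCC and to $μ$-admissible words}; Lemma~\ref{lem:recurscc} lets us assume $𝒮$ strongly connected, but one must also discard snake-states carrying forbidden blocks, which the compatibility of $𝒜$ guarantees are never entered. Once these are handled, every remaining estimate is a routine transcription of Section~\ref{sec:agafanov} with counting measure replaced by the conditional Markov measure.
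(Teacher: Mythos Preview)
Your proposal is essentially the same approach as the paper's: reduce to a strongly connected selector via a Markov analogue of Lemma~\ref{lem:recurscc}, prove the upper bound $μ_{ι(p)}(\{u:w\sqsubseteq \text{output}\})\le μ_{η(p)}(w)$ by the same induction on~$n$, lift the ergodic lemma and its corollary using the Markov chain on~$𝒜$ with transition probabilities $P_{ι(p)ι(q)}$, establish the two-sided bound $(1-ε)μ_{η(p)}(w)\le μ_{ι(p)}(\cdot)\le μ_{η(p)}(w)$ exactly as in Lemma~\ref{lem:equirun}, and conclude via the snake automaton~$𝒜^n$. Your insistence on conditioning the output on $η(p)$ rather than $ι(p)$ is precisely the paper's distinction between $μ_{η(p)}$ and $μ_{ι(p)}$, and your explicit verification that $𝒮(x)\in X_μ$ fills a small gap the paper leaves implicit.
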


The previous theorem can be applied to the Parry measure~$μ$ of a shift~$X$
of finite type because the suppport of~$μ$ is actually $X_μ = X$.

We start with the definition of the conditional measures induced by~$μ$.
For each symbol $a ∈ A$, we let $μ_a$ denote the \emph{conditional measure}
defined by
\begin{displaymath}
  μ_a(a_1 a_2 ⋯ a_n) = P_{aa_1} P_{a_1a_2} ⋯ P_{a_{n-1} a_n}.
\end{displaymath}
Note that the measures~$μ_a$ might not be invariant.  Since $π$ is the
stationnary distribution, the measure~$μ$ can be recovered from the
measures~$μ_a$ by the formula $μ = \sum_{a ∈ A} π_aμ_a$.

The following lemma shows that the set of runs starting from a state~$p$
and outputting a fixed word~$w$ is not too large.  This is the analog of
Lemma~\ref{lem:upper} in the context of Markov measures.
\begin{lemma} \label{lem:markov-upper}
  Let $𝒮$ be an oblivious selector compatible with~$\mu$.  For each
  state~$p$ of~$𝒮$, each integer~$n ⩾ 0$, and word~$w ∈ A^*$ such that
  $|w| ⩽ n$, then the inequality
  $μ_{ι(p)}(\{ u ∈ A^n : p \trans{u|v} q \text{ and } w ⊑ v\}) <
  μ_{η(p)}(w)$ holds.
\end{lemma}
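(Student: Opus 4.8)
The plan is to prove the bound by induction on $n$, using the counting argument of Lemma~\ref{lem:upper} as a template but weighting each run by its conditional measure rather than merely counting it. Write $E_p^n(w) = \{u \in A^n : p \trans{u|v} q \text{ and } w \sqsubseteq v\}$, so the claim reads $\mu_{\iota(p)}(E_p^n(w)) < \mu_{\eta(p)}(w)$, where $\mu_{\iota(p)}$ applied to a set of length-$n$ words means the sum of the conditional measures of its elements. The engine of the induction is a pair of peeling identities forced by the compatibility conditions~i) and~ii): for any first symbol $a$ admitting a transition from $p$ to $q_a$ one has $\mu_{\iota(p)}(a\,u') = P_{\iota(p)a}\,\mu_{\iota(q_a)}(u')$ because $\iota(q_a)=a$, and for $w = b w''$ one has $\mu_{\eta(p)}(b w'') = P_{\eta(p)b}\,\mu_{b}(w'')$. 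I would also record at the outset that the statement is only substantial when $\mu_{\eta(p)}(w) > 0$, since conditions~i) and~ii) force every output $v$ of a run from $p$ to satisfy $P_{\eta(p)v_1}>0,\ P_{v_1v_2}>0,\dots$, hence $\mu_{\eta(p)}(v)>0$; so $\mu_{\eta(p)}(w)=0$ leaves $E_p^n(w)$ empty and there is nothing to prove.

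Obliviousness makes all transitions out of $p$ share a type, which splits the inductive step into two cases. If they have type~II, every run discards its first symbol, so $E_p^n(w)$ is the disjoint union over $a$ of $a\,E_{q_a}^{n-1}(w)$; the first peeling identity together with $\eta(q_a)=\eta(p)$ turns the left-hand side into $\sum_a P_{\iota(p)a}\,\mu_{\iota(q_a)}(E_{q_a}^{n-1}(w))$, and the induction hypothesis at each $q_a$ (same $w$, one fewer step) bounds this by $\mu_{\eta(p)}(w)\sum_a P_{\iota(p)a} \le \mu_{\eta(p)}(w)$; the boundary subcase $|w| = n$ is immediate because a type~II run outputs at most $n-1$ symbols, leaving $E_p^n(w)$ empty. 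If the transitions have type~I, the first symbol is selected, so only runs whose first symbol equals $w_1 = b$ contribute; here $\eta(p)=\iota(p)$, the left-hand side collapses to $P_{\iota(p)b}\,\mu_b(E_{q_b}^{n-1}(w''))$ with $w = b w''$, and the induction hypothesis at $q_b$ combined with the second peeling identity returns the bound $\mu_{\eta(p)}(w)$.

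The subtle part — and the step I expect to be the main obstacle — is turning each $\le$ into the asserted strict $<$. The clean way to organise this is to track the defect $D_p^n(w) = \mu_{\eta(p)}(w) - \mu_{\iota(p)}(E_p^n(w)) \ge 0$ and read off from the two cases the recursion $D_p^n(w) = P_{\iota(p)b}\,D_{q_b}^{n-1}(w'')$ in the type~I case (or $D_p^n(w)=\mu_{\eta(p)}(w)$ when the type~I transition on $b$ is absent), and $D_p^n(w) = \mu_{\eta(p)}(w)\bigl(1 - \sum_a P_{\iota(p)a}\bigr) + \sum_a P_{\iota(p)a}\,D_{q_a}^{n-1}(w)$ in the type~II case, the leading term being exactly the admissible mass the selector discards at $p$ by failing to be $X_\mu$-complete. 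Strictness then amounts to showing this recursion yields $D_p^n(w)>0$, i.e. that a genuine slack term $\mu_{\eta(\cdot)}(\cdot)\bigl(1-\sum_a P_{\iota(\cdot)a}\bigr)$ or a missing type~I transition surfaces somewhere along the forced prefix. This is precisely the delicate point, since the defect can collapse to zero when the selector wastes no admissible continuation; I would therefore isolate the feature of a compatible oblivious selector that guarantees such a loss — the selector genuinely discarding admissible mass on the way to producing $w$ — and then check, via the peeling identities, that the resulting positive slack is transported up to $p$ undiminished by the surrounding factors $P_{\iota(\cdot)a}>0$. Localising this strictly positive defect and tracking it through the recursion is the only non-routine ingredient; everything else reduces to the bookkeeping of conditions~i) and~ii).
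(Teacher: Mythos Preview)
Your induction for the non-strict bound is exactly the paper's argument: induct on~$n$, split on the type of the transitions leaving~$p$, and use the peeling identities $\mu_{\iota(p)}(aU') = P_{\iota(p)a}\mu_a(U')$ and $\mu_{\eta(p)}(aw') = P_{\eta(p)a}\mu_a(w')$ together with the compatibility constraints $\iota(p')=\eta(p')=a$ (type~I) and $\iota(p_a)=a$, $\eta(p_a)=\eta(p)$ (type~II). That part is fine and complete.

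Where you go astray is in taking the strict ``$<$'' at face value. The paper's own proof does not establish strictness: in the base case $n=0$, $w=\lambda$ it explicitly says both sides equal~$1$, and in the type~II case it concludes by summing inequalities against coefficients $P_{\iota(p)a}$ that add to at most~$1$, which again only gives~$\le$. In fact strictness is false as stated: for the base case one has $1<1$, and more generally if the selector is $X_\mu$-complete and every state has type~I (the identity selector), then $\iota(p)=\eta(p)$ and $E_p^n(w)$ is exactly the cylinder of valid words of length~$n$ with prefix~$w$, whose $\mu_{\iota(p)}$-measure equals $\mu_{\eta(p)}(w)$. Your defect recursion makes this transparent: $D_p^n(w)$ is identically zero along any branch where no mass is discarded and the needed type~I transition is always present, so no ``feature of a compatible oblivious selector'' will rescue strictness in general.

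The ``$<$'' is simply a typo for ``$\le$''. Downstream, Lemma~\ref{lem:markov-equirun} only invokes the upper bound $\mu_{\eta(p)}(w)$, and the analogous Lemma~\ref{lem:upper} in the uniform case is stated with ``at most''. Drop the attempt to manufacture a positive defect; the non-strict inequality is all that is claimed in substance and all that is needed.
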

\begin{proof}
  Let $U$ be the set
  $\{ u ∈ A^n : p \trans{u|v} q \text{ and } w ⊑ v\}$.  
  The proof is carried out by induction on the integer~$n$.  If $n = 0$,
  the set $U$ is $U = \{ \emptyword \}$ and $w$ must be the empty
  word~$\emptyword$. The inequality is then satisfied because both measures
  are equal to~$1$.  We now suppose that $n ⩾ 1$. Since the selector is
  oblivious, all transitions starting from state~$p$ have the same type,
  either type~I or type~II.  We distinguish two cases depending on the
  type of these transitions.

  We first suppose that all transitions starting from state~$p$ have
  type~I.  Let us write $w = aw'$ where $a$ is a symbol and $w'$ a word.
  Consider the transition $p \trans{a|a} p'$.  The compatibility of~$𝒮$
  with~$μ$ implies that $ι(p) = η(p)$ and $ι(p') = η(q) = a$.  All runs
  starting from~$p$ such that $w$ is a prefix of the output label must use
  this transition as a first transition.  Applying the induction hypothesis
  to $p'$, $n-1$ and~$w'$ gives that $μ_a(U') < μ_a(w')$ where
  $U' = \{ u ∈ A^{n-1} : p' \trans{u|v} q \text{ and } w ⊑ v\}$.  Since
  $U = aU'$, the result follows from the equalities
  $μ_{ι(p)}(U) = P_{ι(p)a}μ_a(U')$ and $μ_{η(p)}(w) = P_{η(p)a}μ_a(w')$.

  We now suppose that all transitions starting from state~$p$ have type~II,
  that is, have the form $p \trans{a|\emptyword} p_a$ for each symbol~$a$.
  The compatibility of~$𝒮$ with~$μ$ implies that $ι(p_a) = a$ and
  $η(p_a) = η(p)$ for each $a ∈ A$.  All runs of length~$n$ starting
  from~$p$ have an output label of length at most $n-1$.  Therefore, if
  $|w| = n$, there is no run such that $w$ is prefix of its output label
  and the inequality is trivially satisfied.  If $|w| ⩽ n-1$, applying the
  induction hypothesis to each~$p_a$, $n-1$ and~$w$, gives that
  $μ_a(U_a) < μ_{η(p)}(w)$ where
  $U_a = \{ u ∈ A^{n-1} : p_a \trans{u|v} q \text{ and } w ⊑ v\}$.  Since
  $U = \bigcup_{a ∈ A} aU_a$, the result follows from the equalities
  $μ_{ι(p)}(U) = \sum_{a ∈ A} P_{ι(p)a}μ_a(U_a)$ and
  $μ_{η(p)}(w) = \sum_{a ∈ A} P_{ι(p)a}μ_{η(p_a)}(w) = μ_{η(p_a)}(w)$.
\end{proof}

Some of the bounds are again obtained using the ergodic theorem for Markov
chains \cite[Thm~4.1]{Bremaud08}.  For that purpose, we associate a Markov
chain $ℳ$ to each strongly connected automaton~$𝒜$ which is compatible
with~$X_μ$  and $X_μ$-complete.  This means that there is a function~$ι$
from~$Q$ to~$A$ such that if $p \trans{a} q$ is a transition, then $ι(q) =
a$.   For simplicity, we assume that the state set~$Q$
of~$𝒜$ is the set $\{1, …,\#Q\}$.

The state set of the Markov chain is the same set $\{1, …,\#Q\}$.  The
transition matrix of the Markov chain is the matrix
$\hat{P} = (\hat{P}_{pq})_{1 ⩽ p,q ⩽ \#Q}$ where each entry~$\hat{P}_{pq}$
is equal to $P_{ι(p)a} = P_{ι(p)ι(q)}$ if $p \trans{a} q$ is a transition
of~$𝒜$ and $0$ otherwise.  Since the automaton is assumed to be
deterministic and $X_μ$-complete, the matrix~$\hat{P}$ is stochastic.  If the
automaton~$𝒜$ is strongly connected, the Markov chain is irreducible and it
has therefore a unique stationary distribution~$\hat{π}$ such that
$\hat{π} \hat{P} = \hat{π}$.  The vector~$\hat{π}$ is called the
\emph{distribution} of~$𝒜$.  The matrix~$\hat{P}$ and its stationary
distribution~$\hat{π}$ define a Markov measure
$\hat{μ} = μ_{\hat{π},\hat{P}}$ on finite runs of~$𝒜$.  The link between
the measures $μ$ and~$\hat{μ}$ is that
$\hat{μ}(p * u) = \hat{π}_p μ_{ι(p)}(u)$ for each state~$p$ and each
word~$u$.

\begin{lemma} \label{lem:markov-ergodic}
  Let $𝒜$ be a strongly connected deterministic and complete
  automaton and let $π$ be its distribution.  For each real numbers
  $ε,δ > 0$, there exists an integer~$N$ such that for each
  integer $n > N$
  \begin{displaymath}
   μ\left(\left\{
        u ∈ A^n :
        ∃ p,q ∈ Q \;\;\bigl|\occ{p * u}{q}/n - \hat{π}_q\bigr| > δ
    \right\}\right) < ε
  \end{displaymath}
\end{lemma}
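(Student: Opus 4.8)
The plan is to reduce the statement to the ergodic theorem for Markov chains applied to the chain~$ℳ$ associated with~$𝒜$, exactly as for Lemma~\ref{lem:ergodic}; the only genuinely new work is the bookkeeping that turns a statement about the Markov measure~$\hat{μ}$ on runs of~$𝒜$ into the stated inequality for~$μ$ on~$A^n$.

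First I would record the consequence of the ergodic theorem for a \emph{fixed} starting state. Since $𝒜$ is strongly connected, $ℳ$ is irreducible, so \cite[Thm~4.1]{Bremaud08} applies, and as $Q$ is finite there is, for all $ε',δ'>0$, an integer~$N_0$ such that for every $m>N_0$ and every state~$p'∈Q$,
\[
  μ_{ι(p')}\left(\left\{ u'∈A^m : ∃q∈Q\;\;\bigl|\occ{p'*u'}{q}/m-\hat{π}_q\bigr|>δ' \right\}\right)<ε'.
\]
This is immediate from the ergodic theorem because, by the identity $\hat{μ}(p'*u')=\hat{π}_{p'}μ_{ι(p')}(u')$ recorded above together with the determinism and $X_μ$-completeness of~$𝒜$, the measure $μ_{ι(p')}$ restricted to~$A^m$ is precisely the law, under~$ℳ$ started at~$p'$, of the word read along the first $m$ transitions (equivalently, of the state sequence, since $𝒜$ is deterministic).

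Next I would pass from the conditional measures back to~$μ$ by splitting on the first read symbol. Fix a state~$p$. For each symbol~$b$ with $P_{ι(p)b}>0$ put $p'=p⋅b$, so that $ι(p')=b$; for any word $u=bu'$ with $|u'|=n-1$ the run $p*u$ begins with the transition $p\trans{b}p'$ and thereafter coincides with $p'*u'$, so $\occ{p*u}{q}$ equals $\occ{p'*u'}{q}$ up to one and the counts $\occ{p*u}{q}/n$ and $\occ{p'*u'}{q}/(n-1)$ differ by at most $2/(n-1)$. Hence, once $n-1>N_0$ and $2/(n-1)<δ/2$, any word $u$ with $u_1=b$ for which some state has $n$-frequency more than~$δ$ from~$\hat{π}$ has tail $u'$ for which some state has $(n-1)$-frequency more than~$δ/2$ from~$\hat{π}$. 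Combining this with $μ(\{bu':\Phi(u')\})=π_b\,μ_b(\{u':\Phi(u')\})$ and $μ_b=μ_{ι(p')}$, and applying the displayed bound with $m=n-1$, $δ'=δ/2$, yields
\[
  μ\left(\left\{ u∈A^n : u_1=b,\;\; ∃q\;\;\bigl|\occ{p*u}{q}/n-\hat{π}_q\bigr|>δ \right\}\right)<π_b\,ε'.
\]
Summing over the admissible symbols~$b$ (whose weights $π_b$ sum to at most~$1$) bounds by~$ε'$ the $μ$-measure of the words bad for the fixed start~$p$, and a union bound over the $\#Q$ states~$p$ finishes the proof with the choices $ε'=ε/\#Q$, $δ'=δ/2$, and $N$ any integer with $N>N_0+1$ and $2/(N-1)<δ/2$.

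The two measure identities and the elementary first-transition estimate are routine. The point that needs care — and what I would watch most closely — is that $μ$ is \emph{not} the law of the state sequence of $p*u$ for a fixed automaton state~$p$ (that law is $\hat{μ}$ conditioned on starting at~$p$, i.e.\ $μ_{ι(p)}$ read along the run), so the ergodic theorem does not apply to~$μ$ directly; conditioning on the first read symbol is exactly what repairs the mismatch, since after a single step the two laws agree. One should also keep in mind that $p*u$ is defined only when $P_{ι(p)u_1}>0$, so the ``$∃p$'' in the statement silently ranges over the states from which the run exists, which is harmless as only an upper bound is needed.
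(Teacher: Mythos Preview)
Your proof is correct and follows the same route as the paper, only with the bookkeeping made explicit: the paper's proof is the single sentence ``a mere application of the ergodic theorem for Markov chains,'' together with the remark that the theorem holds for any initial distribution and hence also for the conditional measures~$μ_a$. Your conditioning on the first read symbol is precisely the mechanism behind that remark, and your observation that $μ$ is not directly the law of the run $p*u$ for a fixed~$p$ (so one must pass through $μ_{ι(p')}$ after one step) is the only point that actually needs attention; the paper leaves it to the reader.
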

The lemma is stated for the measure~$μ$ but the ergodic theorem is valid
for any initial distribution.  The result is therefore also valid for the
conditional measures~$μ_a$.
\begin{proof}
  The proof is a mere application of the ergodic theorem for Markov chains
  \cite[Thm~4.1]{Bremaud08}.
\end{proof}

\begin{corollary} \label{cor:markov-freqstate}
  Let $𝒜$ be a deterministic and strongly connected automaton
  and let $π$ its distribution.  Let $ρ$ be the run of~$𝒜$
  on a $μ$-generic sequence~$x$.  Then for each state~$q$
  \begin{displaymath}
    \lim_{n → ∞}{\frac{\occ{ρ[1{:}n]}{q}}{n}} = \hat{π}_q.
  \end{displaymath}
  where $ρ[1{:}n]$ is the finite run made of the first $n$ transitions
  of~$ρ$
\end{corollary}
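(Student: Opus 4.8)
The plan is to follow the proof of Corollary~\ref{cor:freqstate} almost verbatim, replacing the uniform measure on blocks of a fixed length by the Markov measure~$\mu$ and the distribution~$\pi$ of~$\mathcal{A}$ by the distribution~$\hat{\pi}$. Since $\mathcal{A}$ is strongly connected, the associated Markov chain with matrix~$\hat{P}$ is irreducible, so $\hat{\pi}_q > 0$ for every state~$q$ and $\sum_{q \in Q}\hat{\pi}_q = 1$. Hence it suffices to prove that $\limsup_{n \to \infty}\occ{\rho[1{:}n]}{q}/n \geqslant \hat{\pi}_q$ for each state~$q$; summing these inequalities and using $\sum_q \hat{\pi}_q = 1$ forces each of them to be an equality, which is the claim.

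Fix $\varepsilon > 0$ and a state~$q$. Applying Lemma~\ref{lem:markov-ergodic} with $\delta = \varepsilon$ (and the remark that it also holds for the conditional measures~$\mu_a$) yields an integer~$k$ such that the set $B = \{\, w \in A^k : \exists p \in Q,\ |\occ{p * w}{q}/k - \hat{\pi}_q| > \varepsilon \,\}$ satisfies $\mu(B) < \varepsilon$. Factorize $\rho = (p_0 * w_0)(p_1 * w_1)(p_2 * w_2)\cdots$ where $|w_i| = k$ for each~$i$ and $x = w_0 w_1 w_2 \cdots$. Since $x$ is $\mu$-generic and $\mu$ is Markovian, genericity coincides with its aligned variant \cite{AlvarezCarton19}, so there is an integer~$N$ such that for every $n > N$ and every $w \in A^k$ the set $\{\, i < n : w_i = w \,\}$ has cardinality at least $(1-\varepsilon)n\,\mu(w)$; this is vacuous when $\mu(w) = 0$, i.e.\ when $w$ is a forbidden block, which never occurs in~$x$. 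Then, exactly as in Corollary~\ref{cor:freqstate}, using that $w \notin B$ implies $\occ{p * w}{q} \geqslant k(\hat{\pi}_q - \varepsilon)$ for every~$p$, and that $\mu(A^k) = 1$ while $\mu(B) < \varepsilon$,
\begin{align*}
  \limsup_{n \to \infty}\frac{\occ{\rho[1{:}n]}{q}}{n}
    &= \lim_{n \to \infty}\frac{1}{nk}\sum_{i=0}^{n-1}\occ{p_i * w_i}{q} \\
    &\geqslant \lim_{n \to \infty}\frac{1}{nk}\sum_{w \in A^k \setminus B}\#\{\, i < n : w_i = w \,\}\cdot\min_{p \in Q}\occ{p * w}{q} \\
    &\geqslant (1-\varepsilon)(\hat{\pi}_q - \varepsilon)\sum_{w \in A^k \setminus B}\mu(w) \\
    &\geqslant (1-\varepsilon)^2(\hat{\pi}_q - \varepsilon).
\end{align*}
Since this holds for every $\varepsilon > 0$, we conclude $\limsup_{n \to \infty}\occ{\rho[1{:}n]}{q}/n \geqslant \hat{\pi}_q$, and the corollary follows.

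The step I expect to require the most care is the replacement of the cardinality bound $\#B < \varepsilon(\#A)^k$ used in Corollary~\ref{cor:freqstate} by the \emph{measure} bound $\mu(B) < \varepsilon$: this is precisely why Lemma~\ref{lem:markov-ergodic} is stated for~$\mu$ (and its conditional versions~$\mu_a$) rather than in terms of counting, so one must apply the ergodic theorem with the appropriate initial distribution. A secondary point to verify is that passing to aligned occurrences is legitimate here, namely that the marginal $\mu$-probability of reading a given block~$w$ at an aligned position of length~$k$ equals~$\mu(w)$; this holds because $\mu$ is shift invariant and $\pi P = \pi$, so that $\sum_{v \in A^{ik}}\mu(vw) = (\pi P^{ik})_{w[1]}P_{w[1]w[2]}\cdots P_{w[k-1]w[k]} = \mu(w)$. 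The boundary states shared between consecutive blocks $p_i * w_i$ and $p_{i+1} * w_{i+1}$ contribute only an $O(n)$ over-count to a sum of size $\Theta(nk)$ and are harmless, just as in Corollary~\ref{cor:freqstate}.
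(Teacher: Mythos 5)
Your proposal follows the paper's own proof of this corollary essentially step for step: the same reduction via $\sum_{q\in Q}\hat{\pi}_q=1$, the same application of Lemma~\ref{lem:markov-ergodic} with $\delta=\varepsilon$ to get a small exceptional set $B\subseteq A^k$ of measure less than $\varepsilon$, the same aligned factorization of $\rho$ into blocks of length~$k$, the same use of $\mu$-genericity (via the aligned-occurrence equivalence of~\cite{AlvarezCarton19}) to lower-bound $\#\{i<n: w_i=w\}$ by $(1-\varepsilon)n\mu(w)$, and the same final chain of inequalities ending in $(1-\varepsilon)^2(\hat{\pi}_q-\varepsilon)$. Your two ``points of care'' (replacing the cardinality bound on $B$ by the measure bound, and checking that aligned occurrences see each block with probability $\mu(w)$) are exactly the adaptations the paper makes.

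There is, however, one genuine logical slip in your reduction: you reduce the corollary to proving $\limsup_{n\to\infty}\occ{\rho[1{:}n]}{q}/n\geqslant\hat{\pi}_q$ for each~$q$ and claim that summing these and using $\sum_q\hat{\pi}_q=1$ forces equality. With $\limsup$ this is false, because $\limsup$ is only \emph{sub}additive: with two states, $\hat{\pi}=(1/2,1/2)$ and frequencies oscillating between $1/4$ and $3/4$, both limsups are $3/4\geqslant 1/2$ and the frequencies sum to $1$, yet no limit exists. The reduction must be to $\liminf_{n\to\infty}\occ{\rho[1{:}n]}{q}/n\geqslant\hat{\pi}_q$: superadditivity of $\liminf$ then forces $\liminf_n\occ{\rho[1{:}n]}{q}/n=\hat{\pi}_q$ for every~$q$, and $\limsup_n\occ{\rho[1{:}n]}{q}/n=1-\liminf_n\sum_{q'\neq q}\occ{\rho[1{:}n]}{q'}/n\leqslant 1-\sum_{q'\neq q}\hat{\pi}_{q'}=\hat{\pi}_q$, giving the limit. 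You appear to have inherited the $\limsup$ from the proof of Corollary~\ref{cor:freqstate}, where it is a typo; the paper's proof of the present corollary correctly uses $\liminf$. The slip is easily repaired, since your displayed estimates hold for \emph{every} sufficiently large~$n$ (up to the usual $O(k)$ boundary correction between consecutive multiples of~$k$) and therefore already establish the $\liminf$ bound; only the surrounding reduction needs to be restated.
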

\begin{proof}
  Since $\sum_{q \in Q}{\hat{π}_q} = 1$, it suffices to prove that
  $\liminf_{n → ∞}{\occ{ρ[1{:}n]}{q}/n}  ⩾ \hat{π}_q$ holds for each
  state~$q$.

  Let $ε>0$ be a positive real number.  Applying
  Lemma~\ref{lem:markov-ergodic} with $δ = ε$ provides an
  integer~$k$ such that
  \begin{displaymath}
    μ\left(\left\{
        u ∈ A^k :
        ∃ p \;\;\bigl|\occ{p * u}{q}/k - \hat{π}_q\bigr| > ε
    \right\}\right) < ε.
  \end{displaymath}
  The run~$ρ$ is then factorized
  \begin{displaymath}
    ρ = p_0 \trans{u_0} p_1 \trans{u_1} p_2  \trans{u_2} p_3 ⋯
         = (p_0 * u_0)(p_1 * u_1)(p_2 * u_2)⋯
  \end{displaymath}
  where each word~$u_i$ is of length~$k$ and $x = u_0u_1u_2⋯$.
  Since $x$ is $μ$-generic, there is an integer~$N$ such that for each
  $n > N$ the cardinality of the set $\{ i < n : u_i = u \}$ is greater
  than $(1-ε)nμ(u)$ for each word $u ∈ A^k$.

  \begin{align*}
    \liminf_{n → ∞}{\frac{\occ{ρ[1{:}n]}{q}}{n}} & =
           \liminf_{n → ∞}{\frac{\occ{ρ[1{:}nk]}{q}}{nk}} \\
    & = \frac{1}{nk}\sum_{i = 0}^{n-1}{\occ{p_i * u_i}{q}} \\
    & ⩾ \frac{1}{nk}\sum_{u ∈ A^k}{\#\{ i < n : u_i = u \}
        × \min_{p ∈ Q}\occ{p * u}{q}} \\
    & ⩾ \frac{1}{nk}\sum_{u ∈ A^k \setminus B}
         {((1-ε)nμ(u))(k(\hat{π}_q-ε))} \\
    & = (1-ε)^2(\hat{π}_q-ε)
  \end{align*}
  Since this inequality holds for each real number $ε>0$, we have
  proved that $\liminf_{n → ∞}{\occ{ρ[1{:}n]}{q}/n} ⩾ \hat{π}_q$.
\end{proof}

\begin{lemma} \label{lem:markov-recurscc}
  Let $𝒜$ be an automaton compatible with~$μ$ and let $ρ$ be a run in~$𝒜$
  on a $μ$-generic sequence in~$X_μ$.  The run~$ρ$ reaches a recurrent
  strongly connected component of~$𝒜$.
\end{lemma}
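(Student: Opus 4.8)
The plan is to mirror the proof of Lemma~\ref{lem:recurscc} from~\cite{SchnorrStimm71}, adapting it to the compatible setting. The key observation is that the argument only needs that every block which is \emph{compatible with the measure} (i.e.\ has positive measure) occurs infinitely often in the input sequence; a $\mu$-generic sequence in $X_\mu$ has exactly this property, since $\mu(w) > 0$ forces the frequency of $w$ to be positive and hence $w$ to occur infinitely often. So the hypothesis ``$\mu$-generic in $X_\mu$'' plays the same role here that ``normal'' played in the full-shift case.

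First I would set up the contrapositive-style combinatorial core: suppose the run $\rho$ on $x$ never enters a recurrent SCC. Then from every state reached by $\rho$ there is a finite sequence of transitions leading out of its current SCC, and since $Q$ is finite there is a uniform bound $\ell$ and a finite set of ``escape'' words such that, from any state, reading some word of length at most $\ell$ (among those compatible with the measure) moves the run strictly ``down'' in the SCC partial order. The point is that these escape words are among the blocks of $X_\mu$, so they have positive $\mu$-measure. Next I would argue that $x$, being $\mu$-generic, must contain infinitely many occurrences of each such escape block — in particular the run must traverse infinitely many SCC-decreasing moves, which is impossible since the SCC partial order has finite height. One has to be slightly careful that the escape word available depends on the current state, but because there are only finitely many states one can take a single word that works simultaneously, or argue state by state using the pigeonhole principle on the finitely many states visited infinitely often.

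The main obstacle — really the only subtlety — is making precise the claim ``the run must take the escape transition'': a $\mu$-generic input guarantees that certain blocks appear in $x$ infinitely often, but one must ensure that whenever such an escape block appears in $x$ \emph{at the right position}, i.e.\ when the run is in the relevant state, the transition is actually defined (this is where $X_\mu$-completeness of $\mathcal{A}$ is used) and leads out of the SCC. The cleanest way is: let $S$ be the set of states visited infinitely often by $\rho$; these all lie in a single SCC $C$ (otherwise one could decompose further), and $C$ must be recurrent, for if some transition $p \trans{a} q$ with $p \in C$ and $P_{\iota(p)a} > 0$ left $C$, then since $a$ can follow $\iota(p)$ with positive probability, the length-$2$ block $\iota(p)a$ has positive $\mu$-measure, hence occurs infinitely often in $x$ at positions where $\rho$ is at $p$ — forcing $\rho$ to leave $C$ infinitely often, contradicting that all of $S$ stays in $C$ eventually. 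This finishes the proof, and as in the full-shift case the hypothesis of $\mu$-genericity is stronger than needed: it suffices that every block of positive measure occurs infinitely often.
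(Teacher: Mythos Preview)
Your overall strategy is sound and matches the paper's, but the ``cleanest way'' you settle on at the end has a genuine gap. You assert that the length-$2$ block $\iota(p)a$ ``occurs infinitely often in $x$ at positions where $\rho$ is at $p$''. Genericity only gives you that $\iota(p)a$ occurs infinitely often in $x$; it says nothing about \emph{which} state the run is in at those positions. There may well be other states $p' \in C$ with $\iota(p') = \iota(p)$, and at every occurrence of the block $\iota(p)a$ the run could be sitting at such a $p'$ rather than at $p$ --- and from $p'$ the transition labelled $a$ may stay inside $C$. The pigeonhole idea you mention does not rescue this: pigeonhole tells you some state $p''$ with $\iota(p'') = \iota(p)$ is hit infinitely often at these positions, but you have no control over whether the transition $p'' \trans{a} \cdot$ leaves $C$.

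The fix is precisely the option you mention but do not carry out: build a \emph{single} word $w$ with $\mu(w) > 0$ such that from \emph{every} state of $C$ from which $w$ can be read, the run $q * w$ exits $C$. This is what the paper does: fix the first letter $a$ of $w$, list the states $q_1,\dots,q_n$ of $C$ with $P_{\iota(q_i)a}>0$, and iteratively extend $w$ so that each $q_i * w$ leaves $C$ (once a run has left $C$ it never returns, so earlier extensions are preserved). Then any occurrence of $w$ in $x$ after the run has settled in $C$ forces an exit, regardless of which state of $C$ the run happens to be in --- because compatibility guarantees that state is one of the $q_i$. Your two-letter argument is too short to synchronise across the several states sharing the same $\iota$-value; the iterative construction is exactly what buys that synchronisation.
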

\begin{proof}
  We claim that for each SCC~$C$ which is not recurrent, there exists a
  word~$w$ with $μ(w) > 0$ and starting with a symbol~$a$ such that from
  any state~$q$ in~$C$ such that $P_{ι(q)a}> 0$ the run $q * w$ leaves~$C$.

  We fix a symbol~$a$.  Let $\{q_1, … ,q_n\}$ be the set of states~$q$
  in~$C$ such $P_{ι(q)a} > 0$.  We construct a sequence $w_0,w_1,…,w_n$ of
  words such that if $i ⩽ j$, then the run $q_i* w_j$ leaves $C$.  We set
  $w_0 = \emptyword$ and the statement is true.  Suppose that $w_0,…, w_k$
  have been already chosen and consider the state $p_k = q_{k+1} ⋅ w_k$.
  If this state~$p_k$ is already out of~$C$, we set $w_{k+1} = w_k$.
  Otherwise, since $C$ is not recurrent, there is a word~$v_k$ such that
  $p_k ⋅ v_k$ is out of~$C$: we set $w_{k+1} = w_kv_k$ so that
  $q_{k+1}⋅ w_{k+1} = p_k⋅ v_k$ is out of~$C$.

  The run~$ρ$ reaches a last SCC~$C$.  Suppose by constriction that $C$ is
  not recurrent.  By the previous claim there is a word $w = aw'$ such that
  $μ(w) > 0$ and such that for any state~$q$ in~$C$ with $P_{ι(q)a} > 0$,
  $q * w$ leaves $C$.  Since $x$ is $μ$-generic, the word~$w$ occurs
  infinitely often in~$x$.  Let $q$ be state of~$C$ reached by the run~$ρ$
  before an occurrence of~$w$.  Since $x$ is in~$X_μ$, $P_{ι(q)a} > 0$.
  This is a contradiction because $q * w$ leaves $C$ while $C$ is supposed
  to be the last SCC reached by~$ρ$.
\end{proof}

\begin{lemma} \label{lem:markov-equirun}
  Let $𝒮$ be a strongly connected selector.  For each integer~$k$
  and each real number $ε > 0$, there exists an integer~$N$ such
  that for each integer~$n > N$, each state~$p$ and each word~$w$ of
  length~$k$, the inequalities
  $(1-ε)μ_{η(p)}(w) < μ_{ι(p)}(\{ u ∈ A^n : p \trans{u|v} q \text{ and } w
  \sqsubseteq v\}) < μ_{η(p)}(w)$ hold.
\end{lemma}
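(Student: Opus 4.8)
The plan is to transport the proof of Lemma~\ref{lem:equirun} to the weighted setting, replacing ``number of runs of length~$n$ starting from~$p$'' throughout by ``$\mu_{\iota(p)}$-measure of the set of their input words''. The upper bound $\mu_{\iota(p)}(\{u \in A^n : p \trans{u|v} q \text{ and } w \sqsubseteq v\}) \leqslant \mu_{\eta(p)}(w)$ is already Lemma~\ref{lem:markov-upper}, so only the lower bound needs an argument; and since for a word $w$ with $\mu_{\eta(p)}(w) = 0$ that same lemma makes the set on the left $\mu_{\iota(p)}$-null, the statement is trivial for such $w$ and I may assume $\mu_{\eta(p)}(w) > 0$. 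As in Lemma~\ref{lem:equirun} I would fix, once and for all, a state $q$ all of whose outgoing transitions are of type~I (if there is none the selector never outputs a symbol and there is nothing to prove), and use that $\hat{\pi}_q > 0$, which holds because the Markov chain attached to the strongly connected automaton underlying $\mathcal{S}$ is irreducible (I take that automaton to be $X_\mu$-complete, as is needed for the chain to be defined).

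First I would fix $k$ and $\varepsilon > 0$, set $m = \min\{\mu_{\eta(p)}(w) : p \in Q,\ w \in A^k,\ \mu_{\eta(p)}(w) > 0\}$ --- a \emph{positive} real, being the minimum of finitely many positive numbers --- and apply Lemma~\ref{lem:markov-ergodic} in its conditional-measure form, uniformly over the finitely many symbols, with $\delta = \hat{\pi}_q/2$ and with the role of $\varepsilon$ played by $\varepsilon m$. This produces an integer $N_0$ such that for every $n > N_0$, every state $p$ and every symbol $a$,
\begin{displaymath}
  \mu_a\bigl(\{\, u \in A^n : \bigl|\occ{p * u}{q}/n - \hat{\pi}_q\bigr| > \hat{\pi}_q/2 \,\}\bigr) < \varepsilon m .
\end{displaymath}
Then I would put $N = \max(N_0, 2k/\hat{\pi}_q)$ and take $n > N$. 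The point of the estimate is that if $u \in A^n$ avoids the set above for the starting state $p$, then $\occ{p * u}{q} > n\hat{\pi}_q/2 > k$, and since every visit of the run $p * u$ to $q$ --- except possibly the last state of the run --- is followed by a type~I transition that outputs exactly one symbol, the output label $v$ of $p * u$ has length at least $k$. Consequently the set $E_p = \{u \in A^n : |v| < k \text{ where } p \trans{u|v} q\}$ satisfies $\mu_{\iota(p)}(E_p) < \varepsilon m$.

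The combinatorial core is then a partition argument. Writing $U_{p,w'} = \{u \in A^n : p \trans{u|v} q \text{ and } w' \sqsubseteq v\}$ for $w' \in A^k$, I would note that a run whose output has length at least $k$ has a unique length-$k$ output prefix, so $A^n$ is the disjoint union of $E_p$ and the sets $U_{p,w'}$, $w' \in A^k$. Since each conditional measure is a probability measure on $A^*$ we have $\sum_{w' \in A^k}\mu_{\eta(p)}(w') = 1$, and Lemma~\ref{lem:markov-upper} gives $\mu_{\iota(p)}(U_{p,w'}) \leqslant \mu_{\eta(p)}(w')$ for every $w'$. Hence
\begin{displaymath}
  \mu_{\iota(p)}(U_{p,w}) = 1 - \mu_{\iota(p)}(E_p) - \sum_{w' \neq w} \mu_{\iota(p)}(U_{p,w'})
  \;\geqslant\; \mu_{\eta(p)}(w) - \mu_{\iota(p)}(E_p) \;>\; \mu_{\eta(p)}(w) - \varepsilon m ,
\end{displaymath}
and since $\varepsilon m \leqslant \varepsilon\mu_{\eta(p)}(w)$ for the words $w$ under consideration, the right-hand side is at least $(1-\varepsilon)\mu_{\eta(p)}(w)$, which is exactly the claimed lower bound.

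The step I expect to be the real obstacle --- and the only point where the weighted setting differs in substance from Lemma~\ref{lem:equirun} --- is making the error term of the ergodic estimate small \emph{relative to the smallest positive value} of $\mu_{\eta(p)}(w)$: in the unweighted proof all the relevant run-counts were equal powers of $\#A$, so a single additive error $\varepsilon(\#A)^{n-k}$ sufficed, whereas here the target quantities $\mu_{\eta(p)}(w)$ vary and may be small, and it is the finiteness of the set of pairs $(p,w)$ that rescues the argument through the constant $m$. The remaining ingredients --- that visits to $q$ convert one-for-one into output symbols, and that the ergodic bound can be taken uniform over all starting states and over the finitely many conditional measures --- are routine.
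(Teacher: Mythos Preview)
Your proof is correct and follows essentially the same route as the paper's: cite Lemma~\ref{lem:markov-upper} for the upper bound, fix a type-I state~$q$, use Lemma~\ref{lem:markov-ergodic} to bound the $\mu_{\iota(p)}$-measure of input words whose run outputs fewer than~$k$ symbols, and finish with the partition/complement argument using $\sum_{w'\in A^k}\mu_{\eta(p)}(w')=1$. Your introduction of $m=\min\{\mu_{\eta(p)}(w):\mu_{\eta(p)}(w)>0\}$ to obtain a single $N$ valid for all $(p,w)$, and your explicit treatment of the case $\mu_{\eta(p)}(w)=0$, are tidy improvements over the paper's somewhat terse handling of the same steps.
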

\begin{proof}
  Let $p$ be any state. The upper bound~$μ_{η(p)}(w)$ has been already
  proved in Lemma~\ref{lem:markov-upper}.  It remains to prove the lower
  bound.

  Let fix a state~$q$ such that the transitions starting from~$q$ are of
  type~I.  If no such state exists, all transitions of the selector outputs
  the empty word and the output label of any run is empty.
  Applying Lemma~\ref{lem:markov-ergodic} with $εμ_{η(p)}(w)$ and $δ =
  π_q/2$ provides an integer $N_0$ such that for each $n > N_0$,
  \begin{displaymath}
    μ_{ι(p)}\left(\left\{
        u ∈ A^n : \;\;\bigl|\occ{p * u}{q}/n - π_q\bigr| > π_q/2
    \right\}\right) < εμ_{η(p)}(w).
  \end{displaymath}
  Fix now $N = \max(N_0,2k/π_p)$ and let $n$ be such that $n > N$.  If a
  word~$u$ of length~$n$ does not belong to the small set above, the run
  $p * u$ satisfies $\occ{p * u}{q} > nπ_q/2 ⩾ k$ for each state~$p$.
  This implies that the length of its output label is greater than $k$.
  Indeed, the state~$q$ has at most $k+1$ occurrences in the run and each
  transition starting from~$q$ outputs one symbol.

  Consider the $(\#A)^n$ runs of the form $p * u$ for $u$ of length~$n$.  The
  measure of those having an output smaller than~$k$ is less than
  $εμ_{η(p)}(w)$.  For each $w' \neq w$, the measure of those having $w'$
  as prefix of length~$k$ of their output label is at most $μ_{η(p)}(w)$.
  It follows that the measure of those having $w$ as prefix of length~$k$
  of their output label is at most $(1-ε)μ_{η(p)}(w)$.
\end{proof}

Let $𝒜$ be an automaton with state set~$Q$.  We now define an automaton
whose states are the run of length~$n$ in~$𝒜$.  We let $𝒜^n$ denote the
automaton whose state set is $\{ p * u : p ∈ Q, u ∈ A^n\}$ and whose set of
transitions is defined by
\begin{displaymath}
  \left\{
    {(p * bu) \textstyle\trans{a} (q * ua)} :
    {p \textstyle\trans{b} q}\text{ in $𝒜$}, \;\; a,b ∈ A
    \text{ and }u ∈ A^{n-1}
  \right\}
\end{displaymath}

The Markov chains associated with the automaton~$𝒜^n$ is called the
\emph{snake} Markov chains.  See Exercises 2.2.4, 2.4.6 and 2.5.2 in
\cite{Bremaud08} for more details.  It is pure routine to check that the
distribution~$\hat{ξ}$ of~$𝒜^n$ is given by
$\hat{ξ}_{p * w} = \hat{π}_pμ_{η(p)}(w)$ for each state~$p$ and each
word~$w$ of length~$n$.

\begin{proof}[Proof of theorem~\ref{thm:markov-selection}]
  Let $y$ be the output of the run of~$𝒮$ on~$x$.  By
  Lemma~\ref{lem:markov-recurscc}, the run of~$𝒮$ on~$x$ reaches a
  recurrent strongly connected component.  Therefore it can be assumed
  without loss of generality that the selector~$𝒮$ is strongly connected.

  Let $k$ be a fixed integer.  We claim that for each word~$w$ of length~$k$
  $\lim_{n → ∞} \occ{y[1{:}n]}{w}/n = μ(w)$.  With each occurrence of a
  word~$w$ of length~$k$ in~$y$, we associate the occurrence of the
  state~$q$ in the run from which starts the transition that outputs the
  first symbol of~$w$.  Note that transitions starting from~$q$ must be of
  type~I.  Conversely, with each occurrence in the run of such a state, we
  associate the block of length~$k$ of~$y$ starting from that position.

  We fix a state~$p$ such that transitions starting from~$p$ have type~I.
  We first claim that for each integer~$n$, each run $p * u$ of length~$n$
  starting from~$p$ has a frequency of $μ_{η(p)}(w)$. To prove this claim,
  we apply Corollary~\ref{cor:markov-freqstate} to the automaton~$𝒜^n$
  where $𝒜$ is the automaton obtained by removing the outputs from~$𝒮$.

  Let $ε > 0$ be a positive real number.  By
  Lemma~\ref{lem:markov-equirun}, there is an integer~$n$ such that for
  each~$w$ on length~$k$, the measure~$μ_{ι(p)}$ of all runs starting
  from~$p$ outputting $w$ as their first $k$ symbols is between
  $(1-ε)μ_{η(p)}(w)$ and~$μ_{η(p)}(w)$.  Combining this result with the
  fact that each run $p * u$ of length~$n$ occurs after state~$p$ with a
  frequency equal to~$μ_{ι(p)}(u)$, we get that the frequency of each
  word~$w$ is between $(1-ε)μ_{η(p)}(w)$ and~$μ_{η(p)}(w)$.  Since this is
  true for each $ε > 0$, all words of length~$k$ have a frequency after
  state~$p$ equal to its measure $μ_{η(p)}(w)$.  Since this is true for
  each state~$p$, we get that each word of length~$k$ have a frequency
  equal to~$μ(w)$ in~$y$.
\end{proof}

\section*{Conclusion}

As a conclusion, we would like to mention a few extensions of our results.
Agafanov's theorem deals with prefix selection: a given digit is selected
if the prefix of the word up to that digit belongs to a fixed set of finite
words.  \emph{Suffix selection} is defined similarly: a given digit is
selected if the suffix of the word from that digit belongs to a fixed set
of sequences.  It has been shown in~\cite{BecherCartonHeiber15} that suffix
selection also preserves normality as long as the fixed set of sequences is
regular.  Let us recall that a set of sequences is regular if it can be
accepted by non-deterministic B\"uchi or by a deterministic Muller
automaton \cite{PerrinPin04}.  The proof given
in~\cite{BecherCartonHeiber15} is based on the characterization of
normality by non-compressibility.  The proof techniques developed here to
prove Agafanov's theorem can be adapted to also prove directly the result
about suffix selection.

The prefix and suffix selections considered so far are usually called
\emph{oblivious} because the digit to be selected is not included to either
the prefix or the suffix taken into account.  Non-oblivious does not
preserve in general normality but it does for a restricted class of sets of
finite words called group languages \cite{CartonVandehey20}.  Group
languages are sets of words which are accepted by deterministic automata
such that each symbol induces a permutation of the states.  This later
property means that for each symbol~$a$, the function which maps each
state~$p$ to the state~$q$ such that $p \trans{a} q$ is a permutation of
the state set.  The techniques presented in this paper can also be adapted
to prove such a result.

\bibliographystyle{plain}
\bibliography{selection}
\end{document}